\newtheorem{Theorem}{Theorem}
\newtheorem{Remark}{Remark}
\newtheorem{Definition}{Definition}
\newcommand{\clo}{\overline}
\newcommand{\ra}{\rangle}
\newcommand{\la}{\langle}
\newcommand{\usc}{u^{\rm sc}}
\newcommand{\uinc}{u^{\rm inc}}
\newcommand{\ut}{u^{\rm t}}
\journal{Elsevier}
\begin{document}

\begin{frontmatter}

%% Title, authors and addresses

%% use the tnoteref command within \title for footnotes;
%% use the tnotetext command for theassociated footnote;
%% use the fnref command within \author or \address for footnotes;
%% use the fntext command for theassociated footnote;
%% use the corref command within \author for corresponding author footnotes;
%% use the cortext command for theassociated footnote;
%% use the ead command for the email address,
%% and the form \ead[url] for the home page:
%% \title{Title\tnoteref{label1}}
%% \tnotetext[label1]{}
%% \author{Name\corref{cor1}\fnref{label2}}
%% \ead{email address}
%% \ead[url]{home page}
%% \fntext[label2]{}
%% \cortext[cor1]{}
%% \address{Address\fnref{label3}}
%% \fntext[label3]{}

%% use optional labels to link authors explicitly to addresses:
%% \author[label1,label2]{}
%% \address[label1]{}
%% \address[label2]{}

\title{\textbf{On-surface radiation condition for multiple scattering of waves}}

\author[]{Sebastian Acosta\corref{cor1}}
\ead{sebastian.acosta@rice.edu}
\ead[url]{sites.google.com/site/acostasebastian01}

\address{Computational and Applied Mathematics, Rice University, Houston, TX \\
Department of Pediatric Cardiology, Baylor College of Medicine, Houston, TX }

\cortext[cor1]{Corresponding author}

\begin{abstract}
The formulation of the \textit{on-surface radiation condition} (OSRC) is extended to handle wave scattering problems in the presence of multiple obstacles. The new multiple-OSRC simultaneously accounts for the outgoing behavior of the wave fields, as well as, the multiple wave reflections between the obstacles. Like boundary integral equations (BIE), this method leads to a reduction in dimensionality (from volume to surface) of the discretization region. However, as opposed to BIE, the proposed technique leads to boundary integral equations with \textit{smooth} kernels. Hence, these Fredholm integral equations can be handled accurately and robustly with standard numerical approaches without the need to remove singularities. Moreover, under weak scattering conditions, this approach renders a convergent iterative method which bypasses the need to solve \textit{single scattering} problems at each iteration.

Inherited from the original OSRC, the proposed multiple-OSRC is generally a crude approximate method. If accuracy is not satisfactory, this approach may serve as a good initial guess or as an inexpensive pre-conditioner for Krylov iterative solutions of BIE.
\end{abstract}

\begin{keyword}
%% keywords here, in the form: keyword \sep keyword
Multiple obstacles \sep wave scattering \sep on surface radiation condition \sep absorbing boundary condition \sep Helmholtz equation

%% PACS codes here, in the form: \PACS code \sep code
%% MSC codes here, in the form: \MSC code \sep code
%% or \MSC[2008] code \sep code (2000 is the default)
\end{keyword}

\end{frontmatter}

%\linenumbers

%%%%%%%%%%%%%%%%%%%%%%%%%%%%%%%%%%%%%%%%%%%%%%%%%%%%%%%%%%%%%%%%%%%%
%%   S E C T I O N
%%%%%%%%%%%%%%%%%%%%%%%%%%%%%%%%%%%%%%%%%%%%%%%%%%%%%%%%%%%%%%%%%%%%
\section{Introduction} \label{Section.Intro}

Multiple scattering problems emerge in many applications dealing with wave phenomena in acoustics, electromagnetism, elastodynamics or hydrodynamics. We restrict our attention to problems governed by the Helmholtz equation which is one of the most useful models for such wave phenomena. However, analogous ideas can be explored for the Maxwell system and the equations of elasticity. We consider a finite number of disjoint impenetrable obstacles embedded in a homogeneous isotropic medium. A given incident wave impinges upon the obstacles, and the problem amounts to calculate the scattered field. From the computational point of view, this is a very challenging problem. Difficulties are encountered due to the unboundedness of the medium, the appropriate satisfaction of the outgoing radiation condition at infinity, the number of obstacles and their location, their unavoidable interaction, and the geometrical characteristics of each one of them. An excellent resource for the study of these problems is the book by Martin \cite{MartinBook}.

When the obstacle's boundary conforms to simple shapes such as spheres or ellipsoids, then methods based on separation of variables and Fourier expansions render a good approach \cite{MartinBook,Antoine2008,Gau-Hua-Str-1995,Lee-2012}. Otherwise, one is generally forced to consider numerical methods based on discretization. These methods typically belong to two main categories: volume and surface discretizations. 

In the first category, we find the finite element (FEM) and the finite difference (FDM) methods among others. FEM and FDM lead to sparse matrix problems. However, their major drawback is the appropriate handling of the unboundedness of the medium. Many nonreflecting boundary conditions have been produced to truncate the physical domain while ensuring the outgoing behavior of the scattered wave. Some of these absorbing boundary conditions include local conditions, global conditions, infinite elements, and perfectly matched layers. These techniques have been extensively reviewed in the following articles and books \cite{GivoliReview2,Tsynkov1998,BHPR2010,GivoliBook,Ihlenburg}.

The second category is based on boundary integral equations (BIE) leading to surface discretizations. Here we find the boundary element method (BEM), the null-field equations and the T-matrix method. See \cite{MartinBook} for an excellent review. They enjoys a reduction in dimensionality (from volume to surface) of the discretization region and the automatic satisfaction of the radiation condition at infinity. Moreover, the formulation of BIE is indifferent to the number of obstacles making it well suited for multiple scattering. However, BIE may become quite costly since they lead to singular integral kernels, and dense and possibly ill-conditioned matrices. Another technique leading to surface discretization is the method of \textit{on-surface radiation conditions} (OSRC) originally developed by Kriegsmann, Taflove and Umashankar \cite{Kriegsmann87}. It is an approximate method that applies local absorbing conditions directly on the surface of the obstacle. As originally proposed, in the presence of a single obstacle, the OSRC leads to an explicit formula (up to numerical integration) avoiding the costly linear system obtained from BEM. However, the OSRC approximation is generally  crude since the application of a local absorbing condition directly on the surface renders a low order method independent of mesh refinement. It is understood in the pertaining literature that the OSRC is designed to sacrifice accuracy in favor of computational speed. Therefore, the OSRC should not be employed if a high degree of accuracy is needed. For studies on high-order absorbing boundary conditions, we refer the reader to \cite{GivoliReview2,Givoli2001,Rab-Giv-Bec-2010} and references therein.

Until recently, all of the volume discretization methods for scattering problems were exclusively formulated to handle a single scatterer. If several obstacles are present, the common practice is to ignore the multiple-component nature of the scatterer and enclose all obstacles with a sufficiently large artificial boundary. This has changed through the work of Grote, Kirsch and Sim \cite{Grote01,Grote02,Grote03} who formulated appropriate nonreflecting boundary conditions on artificial sub-boundaries each one enclosing a separate obstacle. An analogous work has been carried out by Jiang and Zheng \cite{JiangZheng} for the PML approach. Similarly, the main objective of the present Article is to derive an OSRC for multiple scattering problems which brings the absorbing condition to the surface of each obstacle. The naive implementation of the original OSRC to multiple scattering leads to an erroneous solution which neglects the unavoidable interaction between the obstacles. Therefore, we propose a \textit{multiple-OSRC} formulated to simultaneously account for the outgoing behavior of the wave fields \textit{and} the multiple reflections between the obstacles. We may summarize the advantages of the proposed method as follows.

\begin{itemize}
 \item[1.] In common with BIE, the multiple-OSRC leads to a reduction in dimensionality (from volume to surface) of the discretization region and the automatic satisfaction of the radiation condition at infinity.
 
 \item[2.] As opposed to BIE, the multiple-OSRC leads to integral equations with \textit{smooth kernels}. These Fredholm integral equations of the second kind can be handled accurately and robustly with standard numerical approaches such as Nystr\"{o}m, Galerkin and collocation methods, \textit{without} the need to remove singularities. In addition, integral operators with analytic kernels can be approximated with spectrally accurate low-rank operators using separable expansions of the kernels. See for instance \cite[Thm. 2]{Mart-Rok-2007} or \cite[Ch. 2-3]{MartinBook}.
 
 \item[3.] Under appropriate conditions, the multiple-OSRC leads to convergent Neumann series or so-called \textit{orders of scattering}. However, as opposed to BIE, at each iteration there is no need to solve a single-scattering problem for each obstacle because the OSRC renders an explicit formula. Thus, after discretization, there is no need to build or invert a matrix. See details in Section \ref{Section.Iterative}.
 
 \item[4.] If the accuracy of the proposed method is not satisfactory, the multiple-OSRC approach may serve as an inexpensive pre-conditioner for Krylov iterative solutions of BIE \cite{Ant-Dar-2005,Ant-Dar-2007},
or as an extraordinarily good initial guess for such iterative techniques. The multiple-OSRC may also be employed as a fast approximate method to explore parameter-spaces for optimization procedures and the construction of reduced bases \cite{Gan-Hes-Sta-2012}.
\end{itemize}

The proposed multiple-OSRC has a major drawback inherited from the original OSRC. In contrast with BIE, the multiple-OSRC is only an approximate method whose accuracy cannot be improved by grid refinement and no error bounds are available. In many cases, as seen in the literature \cite{Kriegsmann87,Ant-InBook2008,Antoine01,Antoine06,Engquist02}, the OSRC method leads to crude approximations of the exact solution. In addition, the OSRC performance deteriorates even more for non-convex obstacles, or rapid changes in the boundary curvature including corners or edges.
  
However, impressive advances have been accomplished by Antoine, Barucq, Bendali and others with the incorporation of surface curvature information into the absorbing condition. See \cite{Ant-InBook2008,Antoine01,Antoine99,AntBar01,BarDiazDup12} and references therein. Using pseudo-differential calculus, they perform a careful derivation of local absorbing conditions for surfaces of arbitrary shape. See also \cite{Alber-1981} for the use of geometrical optics for non-convex obstacles.

%%%%%%%%%%%%%%%%%%%%%%%%%%%%%%%%%%%%%%%%%%%%%%%%%%%%%%%%%%%%%%%%%%%%
%%   S E C T I O N
%%%%%%%%%%%%%%%%%%%%%%%%%%%%%%%%%%%%%%%%%%%%%%%%%%%%%%%%%%%%%%%%%%%%
\section{Formulation of the problem} \label{Section.Formulation}

In this section we setup the mathematical formulation of the scattering problem and review the well-known Green's integral representation of the solution. We shall work on three-dimensional scenarios, but the formulation in Sections \ref{Section.Formulation}-\ref{Section.AdjointDtN} is also valid in two dimensions with obvious modifications. 

The scatterer consists of $J$ disjoint obstacles each occupying a simply connected bounded domain $\Omega^{-}_{j}$ with respective smooth boundary $\Gamma_{j}$ for $j=1,2,...,J$. The open region external to $\Gamma_{j}$ is denoted by $\Omega^{+}_{j}$. We also define 
\begin{eqnarray*}
 \Omega^{-} = \bigcup_{j=1}^{J} \Omega^{-}_{j},  \quad \Omega^{+} = \bigcap_{j=1}^{J} \Omega^{+}_{j} \quad \text{and} \quad \Gamma = \bigcup_{j=1}^{J} \Gamma_{j}.
\end{eqnarray*}

As usual, an incident wave $\uinc$ impinges upon the $J$ obstacles. The total field $\ut$ is
decomposed as $\ut = \uinc + \usc$ in $\clo{\Omega^{+}}$ where $\usc$ represents the
scattered field induced by the presence of the obstacles. For simplicity we assume a Dirichlet condition $\ut = 0$ on $\Gamma$ so that the outgoing scattered field $\usc$ satisfies, 
\begin{eqnarray}
&& \Delta \usc + k^2 \usc = 0 \quad\qquad \text{in $\Omega^{+}$}, \label{BVPsc1} \\
&& \usc = - \uinc \qquad\qquad \text{on $\Gamma$,} \label{BVPsc2} \\
&& \lim_{r \rightarrow \infty} r \left( \frac{\partial \usc} {\partial r} - i k \usc \right) = 0. \label{BVPsc3}
\end{eqnarray}
The limit in (\ref{BVPsc3}) is known as the Sommerfeld radiation condition where $r = |x|$ for $x \in \mathbb{R}^3$. The well-posedness of the BVP (\ref{BVPsc1})-(\ref{BVPsc3}) is studied in \cite{ColtonKress02,Nedelec01,McLean2000}. In the derivation of the multiple-OSRC we shall rely on the following Green's integral representation of the scattered field \cite{ColtonKress02,Nedelec01,McLean2000}. The scattered wave $\usc$ satisfies the following identity,
\begin{eqnarray}
\usc(x) = \int_{\Gamma} \bigg[ \usc(y) \frac{\partial \Phi(x,y)}{\partial n(y)} - \frac{\partial \usc}{\partial n} (y) \Phi(x,y) \bigg] dS(y), \quad x \in \Omega^{+}, \label{GreenId01}
\end{eqnarray}
where $n$ denotes the unit normal vector to the boundary $\Gamma$ directed into $\Omega^{+}$ and $\Phi$ is the outgoing fundamental solution for the Helmholtz equation given by
\begin{eqnarray}
\Phi(x,y) = \frac{1}{4 \pi} \frac{e^{i k |x - y|}}{|x - y|}, \quad x \neq y. \label{Fund01}
\end{eqnarray}

The BVP (\ref{BVPsc1})-(\ref{BVPsc3}) is well-posed in appropriate spaces. So one may define the exterior Dirichlet-to-Neumann (DtN) operator $\Lambda$ mapping $\usc |_{\Gamma}\mapsto \partial_{n} \usc |_{\Gamma}$. Since $\Gamma$ is the union of several sub-boundaries $\Gamma_{j}$, we will refer to $\Lambda$ as the \textit{multiple-DtN} map. In view of the Green's identity (\ref{GreenId01}) and the Dirichlet boundary condition (\ref{BVPsc2}), the knowledge of the multiple-DtN map $\Lambda$ renders an explicit formula for the scattered field in terms of the incident field,
\begin{eqnarray}
\usc(x) = - \int_{\Gamma} \bigg[ \uinc(y) \frac{\partial \Phi(x,y)}{\partial n(y)} - (\Lambda \uinc)(y) \Phi(x,y) \bigg] dS(y), \quad x \in \Omega^{+}, \label{Explicit01}
\end{eqnarray}
Of course, the application of the multiple-DtN operator $\Lambda$ amounts to solve the full multiple scattering problem in the first place. Hence, expression (\ref{Explicit01}) has little practical use in its current form.

Many explicit approximations of the DtN operator have been developed in the past. They can be derived using pseudo-differential calculus and analytic expansions. For instance, one may use the well-known Engquist-Majda \cite{Engquist01}, Bayliss-Gunzburger-Turkel \cite{Bayliss01} or Higdon \cite{Higdon01} conditions to find applicable approximations of the DtN operator. For details, see the work of Antoine \textit{et al.} \cite{Antoine99,Antoine01,Antoine06}, Atle and Engquist \cite{Engquist02} and references therein. The main idea in the original formulation of the OSRC \cite{Kriegsmann87} is to simply replace the DtN operator in (\ref{Explicit01}) by one of its approximations, and use the resulting formula as an explicit approximation for the scattered field. 

Unfortunately, these approximations are based on the assumption that the scattered field radiates from a single boundary in the outgoing direction. This is the case when $\Omega^{-}$ consists of a single connected component. However, for the multiple scattering problem, the scatterer $\Omega^{-}$ is disconnected since it is the union of several disjoint sub-scatterers $\Omega^{-}_{j}$ for $j=1,2,...,J$. As a result, the scattered field bounces back and forth between obstacles inducing a complicated reflection pattern. As pointed out in the Introduction, the naive implementation of the original OSRC to multiple scattering problems leads to an erroneous solution which neglects the unavoidable interaction between the obstacles. In the remainder of this Article, we shall derive a multiple-OSRC which simultaneously accounts for the outgoing behavior of the scattered field and the multiple reflections between the obstacles.

%%%%%%%%%%%%%%%%%%%%%%%%%%%%%%%%%%%%%%%%%%%%%%%%%%%%%%%%%%%%%%%%%%%%
%%   S E C T I O N
%%%%%%%%%%%%%%%%%%%%%%%%%%%%%%%%%%%%%%%%%%%%%%%%%%%%%%%%%%%%%%%%%%%%

\section{Derivation of the multiple-OSRC} \label{Section.Multiple-OSRC}

Our derivation of the multiple-OSRC rests upon the following fundamental decomposition theorem for multiple scattering problems. This theorem was explicitly proven in \cite{Grote01} and \cite{JiangZheng} for the two- and three-dimensional settings, respectively. In both cases, it was assumed that the obstacles are well-separated and conform to canonical shapes. Another proof was presented by Balabane \cite{Balabane04} without assumptions on the shape or distance between of the obstacles as long as they are disjoint. See also \cite{Wang-Liu-2013,JCP2010,CoaJoly2012,Aco-Vill-Mal} for alternative proofs and similar applications of this decomposition theorem.
\begin{Theorem} \label{TheoremDecomp}
Let $\usc$ be the solution to the BVP (\ref{BVPsc1})-(\ref{BVPsc3}). Then, $\usc$ can be uniquely
decomposed into purely outgoing wave fields $u_{j}$ for $j=1,2,...,J$ such
that
\begin{equation}
\usc =  \sum_{j=1}^{J} u_{j},\qquad \text{in
$\overline{\Omega^{+}}$}, \label{Decomp1}
\end{equation}
where $u_{j}$ radiates purely from $\Gamma_{j}$, that is,
\begin{eqnarray}
\Delta u_{j} + k^2 u_{j} = 0 \quad \text{in $\Omega_{j}^{+}$}, \qquad \text{and} \qquad \lim_{r \rightarrow \infty} r \, \bigg(\frac{\partial u_{j}}{\partial r} - i k u_{j} \bigg) = 0, \label{Decomp2}
\end{eqnarray}
\end{Theorem}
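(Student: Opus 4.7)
For existence I would begin from the Green representation (\ref{GreenId01}), which writes $\usc(x)$ in $\Omega^{+}$ as an integral over the full boundary $\Gamma=\bigcup_{j=1}^{J}\Gamma_{j}$. Since the integrand is additive in the pieces of $\Gamma$, I split
\begin{equation*}
u_{j}(x) := \int_{\Gamma_{j}} \bigg[\usc(y)\,\frac{\partial \Phi(x,y)}{\partial n(y)} - \frac{\partial \usc}{\partial n}(y)\,\Phi(x,y)\bigg]dS(y),
\end{equation*}
so that $\usc=\sum_{j=1}^{J}u_{j}$ in $\Omega^{+}$. Each $u_{j}$ is a combination of single- and double-layer potentials supported on $\Gamma_{j}$ with densities $\partial_{n}\usc|_{\Gamma_{j}}$ and $\usc|_{\Gamma_{j}}$, which lie in suitable Sobolev spaces by the well-posedness of (\ref{BVPsc1})-(\ref{BVPsc3}). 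Standard properties of layer potentials built from the outgoing fundamental solution $\Phi$ then give that $u_{j}$ solves the Helmholtz equation in all of $\mathbb{R}^{3}\setminus\Gamma_{j}$ (in particular in $\Omega_{j}^{+}$) and satisfies the Sommerfeld radiation condition (\ref{Decomp2}), which proves existence.

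For uniqueness, suppose $\{u_{j}\}$ and $\{v_{j}\}$ are two such decompositions and set $w_{j}:=u_{j}-v_{j}$. Then each $w_{j}$ satisfies (\ref{Decomp2}) in $\Omega_{j}^{+}$ and $\sum_{j}w_{j}=0$ in $\overline{\Omega^{+}}$. Fix an index $i$ and define
\begin{equation*}
W_{i}(x) := \begin{cases} w_{i}(x), & x\in\Omega_{i}^{+},\\[2pt] -\displaystyle\sum_{j\ne i}w_{j}(x), & x\in\Omega_{i}^{-}.\end{cases}
\end{equation*}
The right-hand side in $\Omega_{i}^{-}$ is well-defined because, by the disjointness of the obstacles, $\overline{\Omega_{i}^{-}}$ sits inside $\Omega_{j}^{+}$ for every $j\ne i$, where $w_{j}$ is smooth (in fact real-analytic) by interior elliptic regularity. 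On $\Gamma_{i}$, the relation $w_{i}=-\sum_{j\ne i}w_{j}$ extends from $\Omega^{+}$ up to $\Gamma_{i}$ in the trace sense, yielding matching Dirichlet and Neumann data from either side. Hence $W_{i}$ is globally $H^{2}_{\mathrm{loc}}$, solves the Helmholtz equation weakly, and by regularity classically on $\mathbb{R}^{3}$, while still obeying the Sommerfeld condition at infinity. Rellich's lemma combined with the unique continuation principle forces $W_{i}\equiv 0$, and therefore $w_{i}=0$ in $\Omega_{i}^{+}$.

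The main obstacle is the uniqueness step: the $w_{j}$ for $j\ne i$ have to be regular enough near $\Gamma_{i}$ that gluing across $\Gamma_{i}$ produces a bona-fide Helmholtz solution with no spurious surface source on $\Gamma_{i}$. This is where the assumption that the obstacles are disjoint (so that $\mathrm{dist}(\Gamma_{i},\Gamma_{j})>0$ for $j\ne i$) is essential — it guarantees the extensions $w_{j}|_{\Omega_{i}^{-}}$ are smooth, so the two-sided Cauchy data on $\Gamma_{i}$ agree and no jump relations are activated. Everything else reduces to invoking standard facts: mapping properties of the layer potentials for existence, and Rellich/unique continuation for the radiating Helmholtz equation in $\mathbb{R}^{3}$ for uniqueness.
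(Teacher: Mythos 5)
Your argument is correct, but note that the paper itself does not prove Theorem \ref{TheoremDecomp}: it quotes the result from the literature (Grote--Kirsch, Jiang--Zheng for canonical well-separated obstacles, and Balabane for general disjoint obstacles). Your proof is essentially the standard one underlying those references, closest to Balabane's: for existence you split the Green representation (\ref{GreenId01}) over the $\Gamma_{j}$, so each $u_{j}$ is a combination of layer potentials on $\Gamma_{j}$ and is therefore a radiating Helmholtz solution in $\Omega_{j}^{+}$; for uniqueness you take differences $w_{j}$, extend $w_{i}$ across $\Gamma_{i}$ by $-\sum_{j\neq i}w_{j}$ (smooth on a neighborhood of $\overline{\Omega_{i}^{-}}$ because the obstacles are disjoint), and conclude that the glued field is an entire radiating solution, hence zero by Rellich and unique continuation. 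Two small remarks: the equality $\usc=\sum_{j}u_{j}$ is obtained in the open set $\Omega^{+}$ and extends to $\overline{\Omega^{+}}$ only in the sense of boundary traces (the individual double-layer potentials jump across $\Gamma_{j}$, so a word about this is warranted since the theorem asserts the identity in $\overline{\Omega^{+}}$); and in the uniqueness step you can bypass the $H^{2}_{\rm loc}$ trace-matching entirely, since on a one-sided neighborhood of $\Gamma_{i}$ the identity $w_{i}=-\sum_{j\neq i}w_{j}$ already exhibits $w_{i}$ as the restriction of a function that is real-analytic and solves Helmholtz across $\Gamma_{i}$, so the glued $W_{i}$ is automatically a classical solution on all of $\mathbb{R}^{3}$.
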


First of all, notice from (\ref{Decomp2}) that the purely outgoing field $u_{j}$ is a radiating solution to the Helmholtz equation on all of $\Omega_{j}^{+}$, including the interior of the other obstacles $\Omega_{i}^{-}$ for $i \neq j$. This is precisely what we mean by a \textit{purely outgoing} field with respect to the radiating boundary $\Gamma_{j}$. Also notice that the purely outgoing field $u_{j}$ is completely determined by its Dirichlet boundary data on $\Gamma_{j}$. So once this data is fixed, the purely outgoing field $u_{j}$ is completely oblivious to the presence of the other obstacles. 

Now, we turn our attention to the following problem: Find the Dirichlet boundary data of each purely outgoing field $v_{j}$ on its respective radiating boundary $\Gamma_{j}$. Again, we rely on the Green's integral representation to obtain,
\begin{eqnarray}
u_{j}(x) = \int_{\Gamma_{j}} \bigg[ u_{j}(y) \frac{\partial \Phi(x,y)}{\partial n_{j}(y)} - \frac{\partial u_{j}}{\partial n_{j}} (y) \Phi(x,y) \bigg] dS(y), \quad x \in \Omega_{j}^{+}, \quad \text{for all $j=1,2...,J$.} \label{GreenIdV01}
\end{eqnarray}
Here we pause to emphasize the fundamental difference between (\ref{GreenIdV01}) and (\ref{GreenId01}). Notice that in representation (\ref{GreenIdV01}), we only integrate over a single boundary $\Gamma_{j}$ from which $u_{j}$ radiates. Therefore, from the well-posedness of exterior Dirichlet problems, we may define a Dirichlet-to-Neumann operator $\Lambda_{j}$ mapping $u_{j}|_{\Gamma_{j}} \mapsto \partial_{n_{j}} u_{j} |_{\Gamma_{j}}$. We will refer to $\Lambda_{j}$ as the \textit{single-DtN} map since it is associated only with a single sub-boundary $\Gamma_{j}$.
The knowledge of the single-DtN map $\Lambda_{j}$ renders an explicit formula for this purely outgoing field $u_{j}$ in terms of its Dirichlet data,
\begin{eqnarray}
u_{j}(x) = \int_{\Gamma_{j}} \bigg[ u_{j}(y) \frac{\partial \Phi(x,y)}{\partial n_{j}(y)} - (\Lambda_{j} u_{j})(y) \Phi(x,y) \bigg] dS(y), \quad x \in \Omega_{j}^{+}, \quad \text{for all $j=1,2...,J$.} \label{ExplicitV01}
\end{eqnarray}
As opposed to (\ref{Explicit01}), the expression above now has great practical value since the single-DtN map $\Lambda_{j}$ is associated with a \textit{single-radiation problem} and it can be explicitly approximated by the standard absorbing boundary conditions \cite{GivoliReview2,Tsynkov1998,GivoliBook,Ihlenburg,Kriegsmann87,Engquist01,Bayliss01,Higdon01}.

Keeping in mind that each single-DtN map $\Lambda_{j}$ can be suitably approximated by a single-OSRC, then one can setup a system of linear integral equations for the Dirichlet data of each purely outgoing field $u_{j}$. This is accomplished by enforcing the decomposition (\ref{Decomp1}) on $\Gamma = \cup_{j=1}^{J} \Gamma_{j}$ in combination with Dirichlet boundary condition (\ref{BVPsc2}) for $\usc$. This leads to the following system,
\begin{eqnarray}
u_{j}(x) + \sum_{i \neq j} \int_{\Gamma_{i}} \bigg[ u_{i}(y) \frac{\partial \Phi(x,y)}{\partial n_{i}(y)} - (\Lambda_{i} u_{i})(y) \Phi(x,y) \bigg] dS(y) = - \uinc(x), \quad x \in \Gamma_{j}, \label{System01}
\end{eqnarray}
for all $j=1,2...,J$. Conceptually, the system (\ref{System01}) represents the proposed \textit{multiple-OSRC}. 

In order to simplify the derivation of the multiple-OSRC, we have purposely avoided the specification of normed spaces to which the wave fields belong. However, at this point we begin to setup the multiple-OSRC in operator notation and analyze its properties. Hence, it is convenient to state spaces and norms with precision. From the derivation above, we are lead to consider spaces on which the single-DtN operator $\Lambda_{j}$ is bounded. We have the choice of the classical H\"{o}lder spaces \cite[Ch. 3]{ColtonKress02} or the Sobolev space setting \cite[Ch. 4]{McLean2000}. We have chosen the latter. From the well-posedness of the weak formulation of exterior Dirichlet radiating problems \cite{McLean2000}, we have the following regularity properties for $s \in \mathbb{R}$,
\begin{eqnarray*}
&& \text{Incident field} \quad \uinc \in H_{\rm loc}^{s+1/2}(\mathbb{R}^3) \\
&& \text{Scattered field} \quad \usc \in H_{\rm loc}^{s+1/2}(\Omega^{+}) \\
&& \text{Purely outgoing field} \quad u_{j} \in H_{\rm loc}^{s+1/2}(\Omega_{j}^{+}) \\
&& \text{Trace operator for $\Omega_{j}^{+}$ denoted by} \quad \gamma_{j} : H_{\rm loc}^{s+1/2}(\Omega_{j}^{+}) \to H^{s}(\Gamma_{j}) \\
&& \text{Single-DtN  operator for $\Gamma_{j}$ denoted by} \quad \Lambda_{j} : H^{s}(\Gamma_{j}) \to H^{s-1}(\Gamma_{j})
\end{eqnarray*}
We also define the inner product on $H^{0}(\Gamma_{j})$ by
\begin{eqnarray}
\la w , v \ra_{j} = \int_{\Gamma_{j}} \clo{w}(y) v(y) \, dS(y), \label{InnerProd}
\end{eqnarray}
which is also generalized to coincide with the duality pairing between a functional $w \in H^{-s}(\Gamma_{j})$ and a vector $v \in H^{s}(\Gamma_{j})$ for $s \geq 0$. We also generalize the complex conjugate $\clo{w} \in H^{-s}(\Gamma_{j})$ of the functional $w \in H^{-s}(\Gamma_{j})$ by $\la \clo{w} , v \ra_{j} = \clo{\la w , \clo{v} \ra}_{j}$ for all $v \in H^{s}(\Gamma_{j})$.

Now we define the wave \textit{propagation} operators $P_{ij} : H^{s}(\Gamma_{j}) \to H^{s}(\Gamma_{i})$ given by
\begin{eqnarray}
(P_{ij} v)(x) := \la \partial_{n} \clo{\Phi}(x,\cdot) , v \ra_{j} - \la \clo{\Lambda_{j} v} , \Phi(x,\cdot) \ra_{j}, \qquad x \in \Gamma_{i}, \qquad i \neq j. \label{PropagP} 
\end{eqnarray}
So the operator $P_{ij}$ represents the propagation of the wave field from surface $\Gamma_{j}$ to surface $\Gamma_{i}$.

It is now clear that the system of equations (\ref{System01}) can be written in operator notation as follows,
\begin{eqnarray}
\left[ \begin{array}{cccc}
          I    &  P_{1,2} & \cdots & P_{1,J} \\
       P_{2,1} &     I    & \cdots & P_{2,J} \\
        \vdots &  \vdots  & \ddots & \vdots  \\
       P_{J,1} & P_{J,2}  & \cdots &   I 
\end{array} \right] 
\left[ \begin{array}{c}
 \gamma_{1} u_{1} \\
 \gamma_{2} u_{2} \\
 \vdots \\
 \gamma_{J} u_{J}
       \end{array} \right] 
=
- \left[ \begin{array}{c}
 \gamma_{1} \uinc \\
 \gamma_{2} \uinc \\
 \vdots \\
 \gamma_{J} \uinc
       \end{array} \right]. \label{SystemOper01}
\end{eqnarray}
Defining $P : \prod_{j=1}^{J} H^{s}(\Gamma_{j}) \to \prod_{j=1}^{J} H^{s}(\Gamma_{j})$ given by 
\begin{eqnarray}
P = \left[ \begin{array}{cccc}
          0    &  P_{1,2} & \cdots & P_{1,J} \\
       P_{2,1} &     0    & \cdots & P_{2,J} \\
        \vdots &  \vdots  & \ddots & \vdots  \\
       P_{J,1} & P_{J,2}  & \cdots &   0 
\end{array} \right], \label{BigPropagP}
\end{eqnarray}
we can express (\ref{SystemOper01}) in compressed notation as
\begin{eqnarray}
(I + P) u = f, \label{SystemOper02}
\end{eqnarray}
where $u = (\gamma_{1}u_{1},...,\gamma_{J}u_{J}) \in \prod_{j=1}^{J} H^{s}(\Gamma_{j})$ and $f = - (\gamma_{1}\uinc,...,\gamma_{J}\uinc) \in \prod_{j=1}^{J} H^{s}(\Gamma_{j})$. The product space $\prod_{j=1}^{J} H^{s}(\Gamma_{j})$ is made a Banach space when equipped with the Sobolev norm on each product space $H^{s}(\Gamma_{j})$ composed with any norm on $\mathbb{C}^J$. So we equip $\prod_{j=1}^{J} H^{s}(\Gamma_{j})$ with the following norm,
\begin{eqnarray}
\| u \| := \max_{j=1,...,J} \| u_{j} \|_{H^{s}(\Gamma_{j})}, \quad \text{where $u = (u_{1},...,u_{J}) \in \prod_{j=1}^{J} H^{s}(\Gamma_{j})$} \label{Norm}
\end{eqnarray}

The well-posedness of the system (\ref{SystemOper02}) is summarized in the following theorem.
\begin{Theorem} \label{DtNwellposed}
The operator $(I + P) : \prod_{j=1}^{J} H^{s}(\Gamma_{j}) \to \prod_{j=1}^{J} H^{s}(\Gamma_{j})$ is boundedly invertible.
\end{Theorem}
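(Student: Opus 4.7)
The plan is to apply the Fredholm alternative to $(I+P)$ on the product Banach space $\prod_{j=1}^{J} H^{s}(\Gamma_{j})$. This reduces the theorem to two claims: (i) $P$ is compact, and (ii) $(I+P)$ is injective. Bounded invertibility then follows from Riesz--Schauder theory.

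For (i), the decisive fact is that the obstacles are pairwise disjoint, so $\mathrm{dist}(\Gamma_{i},\Gamma_{j})>0$ for $i\neq j$, and consequently both kernels $\Phi(x,y)$ and $\partial_{n(y)}\Phi(x,y)$ in (\ref{PropagP}) are real-analytic on $\Gamma_{i}\times\Gamma_{j}$. For any fixed $x\in\Gamma_{i}$, the restriction $\Phi(x,\cdot)|_{\Gamma_{j}}$ lies in $C^{\infty}(\Gamma_{j})$, so the duality pairing against $\Lambda_{j}v\in H^{s-1}(\Gamma_{j})$ is well defined, and one may differentiate arbitrarily often in $x$ under the pairing; the same applies to the first term of (\ref{PropagP}). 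Hence $P_{ij}$ maps $H^{s}(\Gamma_{j})$ boundedly into $C^{\infty}(\Gamma_{i})$, and in particular into $H^{s+1}(\Gamma_{i})$. Composing with the Rellich compact embedding $H^{s+1}(\Gamma_{i})\hookrightarrow H^{s}(\Gamma_{i})$ shows that each $P_{ij}$ is compact, and therefore $P$ is compact as a finite sum of compact operators. The main subtlety here is confirming that the appearance of the unbounded operator $\Lambda_{j}$ inside $P_{ij}$ does not destroy the smoothing property; it does not, precisely because $\Lambda_{j}$ is paired against the test function $\Phi(x,\cdot)|_{\Gamma_{j}}$, which is of arbitrary regularity.

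For (ii), suppose $u=(u_{1},\ldots,u_{J})$ satisfies $(I+P)u=0$. For each $j$, let $v_{j}$ be the unique purely outgoing Helmholtz solution on $\Omega_{j}^{+}$ whose Dirichlet trace on $\Gamma_{j}$ equals $u_{j}$. The Green's representation (\ref{ExplicitV01}), combined with the definition (\ref{PropagP}), identifies $(P_{ij}u_{j})(x)=v_{j}(x)$ for every $x\in\Gamma_{i}$ with $i\neq j$. Reading the $j$-th row of $(I+P)u=0$ therefore yields $\sum_{i}v_{i}=0$ on $\Gamma_{j}$, and letting $j$ range gives $w:=\sum_{i}v_{i}\equiv 0$ on all of $\Gamma$. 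Since each $v_{i}$ is radiating and satisfies the Helmholtz equation on $\Omega^{+}\subset\Omega_{i}^{+}$, the function $w$ is a radiating Helmholtz solution on $\Omega^{+}$ with vanishing Dirichlet data, and well-posedness of the exterior Dirichlet problem forces $w\equiv 0$ in $\Omega^{+}$. The uniqueness of the decomposition in Theorem~\ref{TheoremDecomp} then forces each $v_{i}\equiv 0$, hence $u_{i}=0$, completing the injectivity argument. This second step is essentially an algebraic consequence of Theorem~\ref{TheoremDecomp} once one recognizes the geometric meaning of the equation $(I+P)u=0$.
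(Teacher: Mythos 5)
Your proof is correct, and the first half (compactness of $P$) is essentially the paper's argument: disjointness of the obstacles makes the kernels $\Phi(x,y)$ and $\partial_{n(y)}\Phi(x,y)$ smooth on $\Gamma_i\times\Gamma_j$, and the possibly unbounded $\Lambda_j$ is harmless because it is applied to (or paired against) the smooth kernel rather than the unknown density, so each $P_{ij}$ is smoothing and hence compact; your explicit passage through $H^{s+1}(\Gamma_i)$ and the Rellich embedding just makes this quantitative. Where you genuinely diverge is in how you close the Fredholm alternative. The paper proves \emph{surjectivity}: for given data, the well-posedness of the exterior problem (\ref{BVPsc1})--(\ref{BVPsc3}) together with the \emph{existence} part of the decomposition in Theorem~\ref{TheoremDecomp} produces a solution of (\ref{SystemOper02}) for every right-hand side, and Riesz--Fredholm then yields injectivity and bounded inverse. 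You instead prove \emph{injectivity} directly: from $(I+P)u=0$ you reconstruct the purely outgoing fields $v_j$ via (\ref{ExplicitV01}), observe that their sum is a radiating Helmholtz solution vanishing on $\Gamma$, invoke \emph{uniqueness} of the exterior Dirichlet problem to kill the sum, and then the \emph{uniqueness} part of Theorem~\ref{TheoremDecomp} to kill each $v_j$; Riesz--Fredholm then gives surjectivity and the bounded inverse. Both routes use the same two pillars (compactness plus the decomposition theorem), but they lean on complementary halves of the well-posedness and of Theorem~\ref{TheoremDecomp}: the paper's version needs existence statements, yours needs uniqueness statements. Yours has the small advantage of being the more standard way to run a Fredholm argument (uniqueness is usually the easier thing to check), while the paper's version shows directly that arbitrary data $f$, not only traces of a single incident field, are attainable; either way the conclusion is the same, and your identification of $(P_{ij}u_j)(x)=v_j(x)$ on $\Gamma_i$, which is the one step that needed care, is justified correctly by the Green representation (\ref{ExplicitV01}) since $\Gamma_i\subset\Omega_j^{+}$ for $i\neq j$.
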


\begin{proof}
Since the obstacles are disjoint then $\Phi(x,y)$ is smooth for all $x \in \Gamma_{i}$ and $y \in \Gamma_{j}$ whenever $i \neq j$. Notice that the first term in (\ref{PropagP}) is an integral operator with kernel $\partial_{n} \Phi(x,y)$. The second term in (\ref{PropagP}) is the composition of $\Lambda_{j}$ and an integral operator with kernel $\Phi(x,y)$. Hence, the propagator operator $P_{ij}$ defined by (\ref{PropagP}) is compact because the single-DtN operator $\Lambda_{j} : H^{s}(\Gamma_{j}) \to H^{s-1}(\Gamma_{j})$ is bounded and the integral operators are highly smoothing since the arguments of $\Phi(x,y)$ belong to disjoint surfaces. This makes the operator matrix $P$ compact and $(I+P)$ bounded. Therefore the Riesz-Fredholm theory \cite{McLean2000,Kress01} applies to the operator $(I+P)$. The well-posedness of the BVP (\ref{BVPsc1})-(\ref{BVPsc3}) and Theorem \ref{TheoremDecomp} imply that the system (\ref{System01}) or equivalent equation (\ref{SystemOper02}) has a solution for all $f \in \prod_{j=1}^{J} H^{s}(\Gamma_{j})$. This implies that $(I+P)$ is surjective, and by the Riesz-Fredholm theory then it is also injective and its inverse is bounded.
\end{proof}

The multiple-OSRC is obtained by replacing the single-DtN operator $\Lambda_{j}$ by an explicit approximation, which we denote by $\tilde{\Lambda}_{j}$. So the precise definition of a multiple-OSRC operator is given as follows.
\begin{Definition}[Multiple-OSRC] \label{Def.MultipleOSRC}
Given a suitable approximation $\tilde{\Lambda}_{j}$ of the single-DtN operator $\Lambda_{j}$ for each $j=1,2,...,J$, the multiple-OSRC operator is defined to be $(I+\tilde{P})$ where $\tilde{P}$ is given by (\ref{PropagP}) and (\ref{BigPropagP}) with $\Lambda_{j}$ replaced by $\tilde{\Lambda}_{j}$. 
\end{Definition}

%%%%%%%%%%%%%%%%%%%%%%%%%%%%%%%%%%%%%%%%%%%%%%%%%%%%%%%%%%%%%%%%%%%%
%%   S E C T I O N
%%%%%%%%%%%%%%%%%%%%%%%%%%%%%%%%%%%%%%%%%%%%%%%%%%%%%%%%%%%%%%%%%%%%

\section{The adjoint Dirichlet-to-Neumann operator} \label{Section.AdjointDtN}

For practical purposes, it is convenient to formulate the multiple-OSRC in terms of the adjoint of the single-DtN map $\Lambda_{j}$. We denote the adjoint operator by $\Lambda_{j}^{*} : H^{1-s}(\Gamma_{j}) \to H^{-s}(\Gamma_{j})$ such that $ \la  \Lambda_{j} v , w \ra_{j} = \la v , \Lambda_{j}^{*} w \ra_{j} = \clo{\la  \Lambda_{j}^{*} w , v \ra}_{j}$ for all $v,w \in H^{s}(\Gamma_{j})$. In that case, the propagation operators, 
$P_{ij} : H^{s}(\Gamma_{j}) \to H^{s}(\Gamma_{i})$ given by (\ref{PropagP}) can be equivalently defined as
\begin{equation}
(P_{ij} v)(x) = \la \partial_{n} \clo{\Phi}(x,\cdot) - \Lambda_{j}^{*} \clo{\Phi}(x,\cdot), v \ra_{j}, \qquad x \in \Gamma_{i}, \qquad i \neq j. 
\label{PropagPAdj} 
\end{equation}

Notice that in (\ref{PropagPAdj}), as opposed to (\ref{PropagP}), no operator is acting on the field $v \in H^{s}(\Gamma_{j})$. This is advantageous in the sense that the field $v$ in (\ref{PropagPAdj}) represents one of the components of the \textit{unknown} solution for the multiple-OSRC system (\ref{SystemOper01}). Instead, we are left to compute $ \Lambda_{j}^{*} \clo{\Phi}(x,\cdot)$, the action of the adjoint single-DtN operator $\Lambda_{j}^{*}$ on the complex-conjugate of the \textit{well-known} fundamental solution $\Phi(x,\cdot)$.  

Now, it remains to characterize the adjoint single-DtN operator in order to obtain useful approximations of it such as the well-known absorbing boundary conditions already discussed in Sections \ref{Section.Intro} and \ref{Section.Formulation}. We accomplish this by explicitly expressing the adjoint DtN operator $\Lambda_{j}^{*}$ in terms of the original DtN operator $\Lambda_{j}$ as follows.
\begin{Theorem} \label{AdjDtN}
Let $\Lambda_{j}^{*} : H^{1-s}(\Gamma_{j}) \to H^{-s}(\Gamma_{j})$ be the adjoint of $\Lambda_{j}$. Then, $\Lambda_{j}^{*} v = \clo{\Lambda_{j} \clo{v}}$ for all $v \in H^{s}(\Gamma_{j})$. 
\end{Theorem}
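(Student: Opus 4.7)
The plan is to translate the adjoint identity, which is phrased with the sesquilinear pairing $\la\cdot,\cdot\ra_j$, into a purely bilinear reciprocity relation for $\Lambda_j$, and then establish that reciprocity via Green's second identity in the exterior domain $\Omega_j^+$.

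First I would unwind the definitions. The claim $\Lambda_j^* v = \clo{\Lambda_j \clo{v}}$ combined with the defining relation $\la \Lambda_j v, w\ra_j = \la v, \Lambda_j^* w\ra_j$ is equivalent, after taking complex conjugates and relabeling $\tilde w = \clo{w}$, to the bilinear identity
\begin{equation*}
\int_{\Gamma_j} (\Lambda_j v)(y)\, \tilde w(y)\, dS(y) = \int_{\Gamma_j} v(y)\, (\Lambda_j \tilde w)(y)\, dS(y), \qquad v,\tilde w \in H^s(\Gamma_j).
\end{equation*}
So the whole proof reduces to showing this reciprocity for the single-DtN map.

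Second, to establish the reciprocity I would let $u_v$ and $u_{\tilde w}$ denote the unique purely outgoing solutions in $\Omega_j^+$ with Dirichlet traces $v$ and $\tilde w$, respectively, so that $\Lambda_j v = \partial_{n_j} u_v$ and $\Lambda_j \tilde w = \partial_{n_j} u_{\tilde w}$. Apply Green's second identity to the pair $(u_v, u_{\tilde w})$ on the truncated exterior $\Omega_j^+ \cap B_R$, where $B_R$ is a large ball. Because both functions solve $\Delta u + k^2 u = 0$, the volume term vanishes, leaving a boundary integral on $\Gamma_j$ and one on the sphere $\partial B_R$.

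Third, I would show the sphere integral tends to $0$ as $R\to\infty$. Since both $u_v$ and $u_{\tilde w}$ satisfy the Sommerfeld condition $(\partial_r - ik)u = o(r^{-1})$ with the \emph{same} sign, the integrand $u_v \partial_r u_{\tilde w} - u_{\tilde w} \partial_r u_v$ equals $iku_v u_{\tilde w} - iku_{\tilde w} u_v + o(r^{-2}) = o(r^{-2})$, so after integration against $r^2 \, d\sigma$ on $\partial B_R$ it vanishes in the limit. (This is the step where outgoing-versus-outgoing, rather than outgoing-versus-conjugate-outgoing, is crucial; it is precisely why a conjugation appears in the formula for $\Lambda_j^*$.) What remains is $\int_{\Gamma_j} (u_v \partial_{n_j} u_{\tilde w} - u_{\tilde w} \partial_{n_j} u_v)\, dS = 0$, which is exactly the reciprocity identity above.

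The only genuine obstacle is a regularity one: Green's second identity in the classical form requires sufficient smoothness, whereas $v, \tilde w$ live in $H^s(\Gamma_j)$ for general $s\in\mathbb{R}$. I would handle this by a density argument, first proving the identity for smooth data (where Green's formula is standard) and then extending to arbitrary $H^s$ traces using the continuity of $\Lambda_j : H^s(\Gamma_j) \to H^{s-1}(\Gamma_j)$ and the duality $H^s(\Gamma_j)\times H^{-s}(\Gamma_j)\to\mathbb{C}$ that defines the pairing $\la\cdot,\cdot\ra_j$. Once the bilinear identity is proven in full generality, the theorem follows immediately by the algebraic rearrangement from the first paragraph.
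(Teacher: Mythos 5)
Your proof is correct and follows essentially the same route as the paper: both arguments reduce the adjoint identity to the bilinear reciprocity $\int_{\Gamma_j} (\Lambda_j v)\,\tilde w \, dS = \int_{\Gamma_j} v\,(\Lambda_j \tilde w)\, dS$ for two outgoing solutions, established via Green's second identity in $\Omega_j^{+}$ with the Sommerfeld condition eliminating the contribution from the large sphere. The only difference is packaging: the paper introduces an incoming field $V$ and applies Green's identity to the outgoing pair $W$ and $\overline{V}$, whereas you conjugate at the algebraic level first and spell out the $\partial B_R$ limit and the density argument that the paper leaves implicit.
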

\begin{proof} 
Let $w,v \in H^{s}(\Gamma_{j})$ be arbitrary. Let $W, V \in H_{\rm loc}^{s+1/2}(\Omega_{j}^{+})$ be the unique generalized solutions of the following problems,
\begin{align*}
& \Delta W + k^2 W = 0 					&& \Delta V + k^2 V = 0 \\
& \gamma_{j}W = w 					&& \gamma_{j}V = v \\
& \partial_{r} W - i k W = \mathcal{O}(1/r^2)		&& \partial_{r} V + i k V = \mathcal{O}(1/r^2)
\end{align*}
Notice from the last two conditions that $W$ is \textit{outgoing} and $V$ is \textit{incoming}. Also notice that $\clo{V}$ is an outgoing solution of the Helmholtz in $\Omega_{j}^{+}$ satisfying the Dirichlet condition $\gamma_{j} \clo{V} = \clo{v}$ on $\Gamma_{j}$. 

An application of Green's second identity \cite{McLean2000,ColtonKress02} to both outgoing fields $W$ and $\clo{V}$ yields,
\begin{eqnarray}
\int_{\Gamma_{j}} \partial_{n} W \,\, \gamma_{j} \clo{V} \,\, dS =  \int_{\Gamma_{j}} \partial_{n} \clo{V} \,\, \gamma_{j}W \,\,dS. \notag
\end{eqnarray}
Now, from the definition of the DtN map, we have that $\Lambda_{j} w = \partial_{n} W$ and $\Lambda_{j} \clo{v} = \partial_{n} \clo{V}$. So in terms of the sesquilinear form, we obtain 
\begin{eqnarray}
\la \Lambda_{j} w , v  \ra_{j} = \la \Lambda_{j} \clo{v} , \clo{w} \ra_{j} = \la w , \clo{\Lambda_{j} \clo{v}} \ra_{j} \quad \text{for all $w,v \in H^{s}(\Gamma_{j})$,} \notag
\end{eqnarray}
which reveals that $\Lambda_{j}^{*} v = \clo{\Lambda_{j} \clo{v}}$ for all $v \in H^{s}(\Gamma_{j})$ as desired. 
\end{proof}

\begin{Remark}
From the proof of Theorem \ref{AdjDtN} we see that the adjoint DtN map is associated with incoming fields in the same manner as the DtN map is associated with the outgoing counterparts. More precisely, using the notation in the proof above, notice that $\Lambda_{j}^{*} v = \clo{\Lambda_{j} \clo{v}} = \clo{\partial_{n} \clo{V}} = \partial_{n} V$. Therefore, it follows that the adjoint operator $\Lambda_{j}^{*}$ maps the Dirichlet data $\gamma_{j} V$ into the Neumann data $ \partial_{n} V$ of the \textit{incoming} wave field $V$.
\end{Remark}

From (\ref{PropagPAdj}) and Theorem \ref{AdjDtN}, we see that the propagation operator $P_{ij}$ can be expressed as a surface integral operator as follows
\begin{equation}
\begin{aligned}
& (P_{ij} v)(x) =  \int_{\Gamma_{j}} K_{ij}(x,y) v(y) \, dS(y), \quad x \in \Gamma_{i} \\ 
& K_{ij}(x,y) = \frac{\partial \Phi(x,y) }{\partial n(y)} - \Lambda_{j} \Phi(x,y), \quad x \in \Gamma_{i} , \quad y \in \Gamma_{j}. 
\end{aligned} \label{PropagInt}
\end{equation}

Here the DtN map $\Lambda_{j}$ associated with a \textit{single-radiation problem} can be explicitly approximated by the standard absorbing boundary conditions \cite{GivoliReview2,Tsynkov1998,GivoliBook,Ihlenburg,Kriegsmann87,Engquist01,Bayliss01,Higdon01}. Recall that most of these absorbing boundary conditions involve tangential and/or normal derivatives. The order of these derivatives usually increases to obtain better performance \cite{Givoli2001}. Hence, it is often the case that the approximate DtN map $\tilde{\Lambda}_{j}$ yields an \textit{unbounded} operator in the normed spaces under consideration.

However, the representation (\ref{PropagInt}) renders another advantage, coming from the fact that the DtN operator $\Lambda_{j}$ acts on the fundamental solution $\Phi$, and not on the unknown function $v$. Recall that $\Phi = \Phi(x,y)$ is smooth (analytic) for arguments $x \in \Gamma_{i}$ and $y \in \Gamma_{j}$ with $i \neq j$. Hence, the approximation $\tilde{\Lambda}_{j} \Phi(x,y)$ simply yields another smooth integral kernel which leaves the boundedness (and actually compactness) of the propagation operator $P_{ij}$ intact. This represents a tremendous advantage from both theoretical and practical points of view.

Notice that the true solution $u$ of the system (\ref{SystemOper02}) is approximated by the solution $\tilde{u}$ of the perturbed system according to the Def. \ref{Def.MultipleOSRC} of the multiple-OSRC method. To ensure that $\tilde{u}$ exists and that $\| u - \tilde{u} \|$ is sufficiently small, we may resort to either one of the following options:
\begin{itemize}
 \item[(a)] Enforce that $\| \Lambda_{j} - \tilde{\Lambda}_{j} \|$ is sufficiently small in the \textit{operator} norm.
 \item[(b)] Enforce that $\| \Lambda_{j} \Phi - \tilde{\Lambda}_{j} \Phi \|$ is sufficiently small in some appropriate norm.  
\end{itemize}
Notice that option (a) cannot be enforced in general since $\tilde{\Lambda}_{j}$ may be an unbounded operator as explained above. And even if the approximate DtN operator $\tilde{\Lambda}_{j}$ is chosen to map boundedly into the appropriate space, it is much easier to ensure the smallness of a vector-norm, such as in option (b), than the smallness of an operator-norm, such as option (a) above. Again, this second option is possible because we have proposed to apply the DtN operator to a \textit{fixed} function $\Phi$ as opposed to the unknown function $v$. In other words, the operator $\tilde{\Lambda}_{j}$ must be chosen to act satisfactorily only on the fundamental solution $\Phi$ rather than the whole space $H^{s}(\Gamma_{j})$ which may possible contain highly oscillatory or non-smooth functions.

%%%%%%%%%%%%%%%%%%%%%%%%%%%%%%%%%%%%%%%%%%%%%%%%%%%%%%%%%%%%%%%%%%%%
%%   S E C T I O N
%%%%%%%%%%%%%%%%%%%%%%%%%%%%%%%%%%%%%%%%%%%%%%%%%%%%%%%%%%%%%%%%%%%%

\section{Orders of Scattering} \label{Section.Iterative}

When the obstacles are sufficiently small or far away from each other (with respect to the wavelength), we may show that the propagation operator $P$, defined in (\ref{PropagInt}), has a sufficiently small norm so that (\ref{SystemOper02}) can be solved using the Neumann series. This is easily seen from the following estimates,
\begin{eqnarray*}
&& \frac{\partial \Phi(x,y)}{\partial n(y)} = \frac{e^{i k |x-y|}}{4 \pi |x-y|} \left( \frac{ (x-y) \cdot n(y) }{ |x-y|^2} -  ik \frac{ (x-y) \cdot n(y) }{ |x-y| }  \right), \qquad x \in \Gamma_{i}, \quad y \in \Gamma_{j} \\
&& | \Lambda_{j} \Phi(x,\cdot) | \leq \frac{C}{\text{dist}(x,\Gamma_{j})} 
\end{eqnarray*}
where the constant $C=C(k)$ is independent of $x$ and grows with $k$. For disjoint obstacles, a simple calculations shows that
\begin{eqnarray}
\| P_{ij} \| \leq \, C(k) \, \frac{|\Gamma_{i}|^{1/2} |\Gamma_{j}|^{1/2} }{\text{dist}(\Gamma_{i},\Gamma_{j})}, \label{WeakScatt}
\end{eqnarray}
for some other constant $C=C(k)$ which grows as $\sim k$. For a fixed wavenumber $k$, if the distance between $\Gamma_{i}$ and $\Gamma_{j}$ is sufficiently large, or the surface measures of $\Gamma_{i}$ and $\Gamma_{j}$ are sufficiently small, then the norm of $P_{ij}$ will be less than $1$. If the smallness in the norm of $P_{ij}$ is uniform for all $i,j = 1,...,J$ then we have that $\| P \| < 1$ as well. In that case, we have that the solution of (\ref{SystemOper02}) is given by
\begin{eqnarray*}
u = \sum_{n=0}^{\infty} (-1)^n P^n f
\end{eqnarray*}
with convergence in the norm of $\prod_{j=1}^{J} H^{0}(\Gamma_{j})$ defined in (\ref{Norm}). Unfortunately, the constant $C$ appearing in (\ref{WeakScatt}) is non-trivially dependent on the geometry of the obstacles which makes it difficult to estimate a-priori. This renders inequality (\ref{WeakScatt}) hard to use in practice as a precise way to verify the smallness of the propagation operator $P$.

We refer to Chapter 8 in \cite{MartinBook} for a review of the orders of scattering theory widely employed in multiple scattering problems. Recall that in practice $P$ is replaced by its approximation $\tilde{P}$ using approximations of the single-DtN maps $\Lambda_{j}$. Thus, we obtain an approximate solution 
\begin{eqnarray}
\tilde{u} = \sum_{n=0}^{N} (-1)^n \tilde{P}^n f \label{NeumannApprox}
\end{eqnarray}
where the series has been truncated. The error $\| u - \tilde{u} \|$ can only be controlled by increasing $N$ and using better approximations of the propagator $P$. As we mentioned in the Introduction, the application of (\ref{NeumannApprox}) does not require to solve single-scattering problems which is inherited from the original OSRC. In practice, one may equivalently arrive at (\ref{NeumannApprox}) from the following iterative scheme, $\tilde{u}_{n+1} = - \tilde{P} \tilde{u}_{n}$ and $\tilde{u}_{0} = f$. The application of this iteration only involves the (numerical) integration associated with the applications of $\tilde{P}$ with no need to build or invert a matrix. As a result, this method renders an approximate solution at a low computational cost, assuming that the weak-scattering condition is well-satisfied. 

We finish this Section with some comments concerning the validity of the \textit{orders of scattering} approach in two-dimensional scenarios. We realize that the estimate (\ref{WeakScatt}) is only valid in three dimensions because the fundamental solution $\Phi(x,y)$ decays as $|x-y|^{-1}$. In two dimensions, the fundamental solution decays as $|x-y|^{-1/2}$. Hence, we should expect the wide-spacing approximation to work better in three-dimensional problems than in two dimensions.

%%%%%%%%%%%%%%%%%%%%%%%%%%%%%%%%%%%%%%%%%%%%%%%%%%%%%%%%%%%%%%%%%%%%
%%   S E C T I O N
%%%%%%%%%%%%%%%%%%%%%%%%%%%%%%%%%%%%%%%%%%%%%%%%%%%%%%%%%%%%%%%%%%%%

\section{The far-field pattern} \label{Section.FFP}

In this section, we explicitly review the definition of the so-called far-field pattern corresponding to the scattered field. This is employed in the next section to compare exact and numerical solutions. It is well-known that the scattered field $\usc$ admits the following asymptotic behavior
\begin{eqnarray*}
\usc(x) = \frac{e^{i k |x|}}{4 \pi |x|} \left( u^{\infty}(\hat{x}) + \mathcal{O}(|x|^{-1})  \right), \qquad \hat{x} = x / |x|
\end{eqnarray*}
where $u^{\infty}$ is known as the far-field pattern of $\usc$. An analogous asymptotic behavior holds for each purely outgoing wave field $u_{j}$ defined in Theorem \ref{TheoremDecomp}. From the asymptotics of the fundamental solution
\begin{eqnarray*}
\Phi(x,y) = \frac{e^{i k |x|}}{4 \pi |x|} \left(  e^{- i k \hat{x} \cdot y} + \mathcal{O}(|x|^{-1}) \right) \quad \text{and} \quad
\frac{\partial \Phi(x,y)}{\partial n(y)} = \frac{ e^{i k |x|} }{4 \pi |x|} \left( - ik \hat{x} \cdot n(y) e^{- i k \hat{x} \cdot y} + \mathcal{O}(|x|^{-1}) \right)
\end{eqnarray*}
and the representation (\ref{ExplicitV01}), we obtain the far-field pattern for each purely-outgoing wave field to be
\begin{eqnarray}
u^{\infty}_{j}(\hat{x}) = \int_{\Gamma_{j}} \bigg[ - i k \hat{x} \cdot n(y) e^{ - ik \hat{x} \cdot y} - \Lambda_{j} e^{ - ik \hat{x} \cdot y} \bigg] u_{j}(y) dS(y).  \label{ApproxFFP}
\end{eqnarray}
Therefore, once each purely-outgoing wave field $u_{j}$ is approximated using the multiple-OSRC, then we obtain an approximation to the multiple-scattering far-field pattern given by
\begin{eqnarray}
u^{\infty} = \sum_{j=1}^{J} u^{\infty}_{j}.  \label{MultipleFFP}
\end{eqnarray}

%%%%%%%%%%%%%%%%%%%%%%%%%%%%%%%%%%%%%%%%%%%%%%%%%%%%%%%%%%%%%%%%%%%%
%%   S E C T I O N
%%%%%%%%%%%%%%%%%%%%%%%%%%%%%%%%%%%%%%%%%%%%%%%%%%%%%%%%%%%%%%%%%%%%

\section{Numerical examples}

In this section we present the numerical results obtained from the implementation of the proposed multiple-OSRC. We discuss the following two examples. 

\subsection{Example 1} \label{SubSection.Ex1}
Here we only consider two obstacles embedded in the three-dimensional space which are shaped and located axisymmetrically with respect to the $z$-axis. We define an incident field as to easily obtain the exact solution for this problem. More precisely, the incident field is given by
\begin{eqnarray}
\uinc(x) = \Phi(x,c_{1}) + \Phi(x,c_{2}).  \label{Eqn.Uinc}
\end{eqnarray}
This is the superposition of two point sources with respective centers at $c_{1}$ and $c_{2}$. Each center point $c_{j}$ is purposely located within the obstacle $\Omega_{j}^{-}$. Notice that each point source $\Phi(\cdot,c_{j})$ is a radiating solution of the Helmholtz equation in $\Omega_{j}^{+}$. As a consequence, a simple verification shows that each purely-outgoing scattered field is exactly given by $u_{j}(x) = - \Phi(x,c_{j})$ in $\Omega_{j}^{+}$. This latter statement is true regardless of the shape of the obstacle $\Omega_{j}^{-}$ as long as $c_{j} \in \Omega_{j}^{-}$. We also obtain the exact far-field pattern to be $u^{\infty}(\hat{x}) = - (e^{- i k \hat{x} \cdot c_{1}} + e^{- i k \hat{x} \cdot c_{2}})$. For the numerical results shown below we have chosen $c_{1} = (0,0,2)$ and $c_{2} = (0,0,-2)$ expressed in Cartesian coordinates.

Now, we consider explicit expressions to approximate the single-DtN map $\Lambda_{j}$ acting on the fundamental solution $\Phi(x,y)$. For instance, from \cite{Alber-1981} we can extract the following approximation valid for high-frequencies. In principle the OSRC is not limited to high-frequency waves, but most well-known local approximations of the single-DtN map perform better as the frequency increases due to the asymptotic localization of the solution in the high-frequency regime. See details in \cite{Antoine99}. The numerical results of this section are obtained using Theorem \ref{Thm.Alber} to approximate each single-DtN operator $\Lambda_{j}$ by neglecting the higher-order terms.

\begin{Theorem} \label{Thm.Alber}
Let $x \in \Gamma_{i}$ and $y \in \Gamma_{j}$.
\begin{itemize}
\item[(i)] If none of the rays starting at $x$ passes over $y$, ie. $y$ belongs to the portion of $\Gamma_{j}$ not illuminated by a point source at $x$, then for any $m > 1$
\begin{eqnarray*}
\Lambda_{j} \Phi(x,y) = \Phi(x,y) \left( \frac{ (x-y) \cdot n(y) }{ |x-y|^2} -  ik \frac{ (x-y) \cdot n(y) }{ |x-y| }  \right) + \mathcal{O}(k^{-m}).
\end{eqnarray*}

\item[(ii)] If there is a ray passing over $y$ which is not tangent to $\Gamma_{j}$ at $y$, ie. $y$ belongs to the portion of $\Gamma_{j}$ illuminated by a point source at $x$, then
\begin{eqnarray*}
\Lambda_{j} \Phi(x,y) = \Phi(x,y) \left( ik \frac{(x-y)\cdot n(y)}{|x-y|} + \frac{(x-y)\cdot n(y)}{|x-y|^2} -  \frac{1}{2 (x-y) \cdot n(y)  } \right) + \mathcal{O}(k^{-1}).  
\end{eqnarray*}
\end{itemize}
\end{Theorem}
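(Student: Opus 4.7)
The plan is to reinterpret $\Lambda_j \Phi(x,\cdot)\big|_y$ as the normal derivative of an outgoing wave field associated with an auxiliary point-source scattering problem, and then to apply high-frequency (WKB / geometrical optics) asymptotics. First I would let $u^{\mathrm{tot}}$ denote the total field for sound-soft scattering of a point source at $x$ by $\Omega^{-}_{j}$: write $u^{\mathrm{tot}} = \Phi(x,\cdot) + u^{\mathrm{sc}}$ in $\Omega^{+}_{j}$ with $u^{\mathrm{tot}}|_{\Gamma_j} = 0$ and $u^{\mathrm{sc}}$ outgoing. Setting $u_j := -u^{\mathrm{sc}}$, one verifies that $u_j$ is outgoing in $\Omega^{+}_{j}$ and satisfies $u_j|_{\Gamma_j} = \Phi(x,\cdot)|_{\Gamma_j}$, so by uniqueness of the exterior Dirichlet problem $\Lambda_j \Phi(x,\cdot)\big|_y = \partial_{n(y)} \Phi(x,y) - \partial_{n(y)} u^{\mathrm{tot}}(y)$. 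Direct differentiation of $\Phi(x,y) = e^{ik|x-y|}/(4\pi|x-y|)$ supplies the ``free-space'' piece $\Phi(x,y)\bigl( (x-y)\cdot n(y)/|x-y|^2 - ik (x-y)\cdot n(y)/|x-y|\bigr)$, and the task reduces to the high-frequency behaviour of $\partial_{n(y)} u^{\mathrm{tot}}(y)$ in each of the two geometric regimes.

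For case (i), I would invoke the classical fact that for a smooth (strictly convex) obstacle the scattered field and all its boundary traces decay faster than any algebraic order in $k^{-1}$ on compact subsets of the geometric shadow; hence $\partial_{n(y)} u^{\mathrm{tot}}(y) = \mathcal{O}(k^{-m})$ for every $m > 1$, and only the free-space piece survives, yielding the formula in (i). For case (ii), I would use a WKB ansatz $u^{\mathrm{sc}}(z) \sim B(z;k) e^{ik \psi(z)}$ near the illuminated portion of $\Gamma_j$, with amplitude expansion $B \sim \sum_{\ell \ge 0} k^{-\ell} B_\ell$. The eikonal $\psi$ is determined by the reflection law ($\psi|_{\Gamma_j} = |x-\cdot|$ and $\nabla \psi|_{\Gamma_j}$ the reflected unit vector of $\nabla |x-\cdot|$), and the amplitudes $B_\ell$ are transported along reflected rays by the standard transport equations, with initial conditions fixed on $\Gamma_j$ by the boundary match $B_\ell|_{\Gamma_j} = -A_\ell|_{\Gamma_j}$, where $A_\ell$ are the amplitudes of the incident WKB expansion of $\Phi(x,\cdot)$ (so $A_0|_{\Gamma_j} = 1/(4\pi|x-\cdot|)$ and subsequent $A_\ell$ are $0$).

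Computing $\partial_n u^{\mathrm{tot}} = \partial_n u^{\mathrm{inc}} + \partial_n u^{\mathrm{sc}}$ at $y$ and using that the incident and reflected rays subtend equal angles with $n(y)$, the $\mathcal{O}(k)$ contributions double, producing the leading term $ik(x-y)\cdot n(y)/|x-y| \cdot \Phi(x,y)$ in (ii) after subtracting $\partial_{n(y)} \Phi(x,y)$; the $k^0$ coefficient decomposes into a contribution from $\partial_n A_0$ yielding the $(x-y)\cdot n(y)/|x-y|^2$ term, and a contribution from the first transport equation for $B_0$, which collapses, after a careful computation in coordinates adapted to the ray-surface geometry at $y$, to $-1/(2(x-y)\cdot n(y))$. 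The main obstacle will be precisely this last bookkeeping step: showing that the geometric quantities appearing in the transport equation (ray-spreading Jacobian, principal curvatures of $\Gamma_j$, angle between $y-x$ and $n(y)$) recombine into the single elementary coefficient $-1/(2(x-y)\cdot n(y))$. A secondary technical point is the rigorous justification of the $\mathcal{O}(k^{-m})$ decay in the shadow, which relies on propagation-of-singularities estimates of Melrose--Taylor type and implicit convexity/smoothness hypotheses on $\Omega^{-}_{j}$ beyond what is spelled out in the statement.
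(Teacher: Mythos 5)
You should first know that the paper contains no proof of this theorem at all: it is presented as an approximation ``extracted'' from Alber's geometrical-optics analysis \cite{Alber-1981}, so your proposal is in effect a re-derivation of the cited result, and your overall methodology (high-frequency asymptotics of an auxiliary point-source scattering problem) is the same as that of the cited source. Your opening reduction is correct and is the right bridge: the scattered field $u^{\rm sc}$ for a sound-soft point source at $x$ is outgoing in $\Omega_j^{+}$ and equals $-\Phi(x,\cdot)$ on $\Gamma_j$, so by uniqueness $\Lambda_j\Phi(x,\cdot)\big|_{y}=\partial_{n(y)}\Phi(x,y)-\partial_{n(y)}u^{\rm tot}(y)$, and the free-space derivative indeed reproduces the bracket in case (i).

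The genuine gap is case (ii). The entire nontrivial content there --- that the $O(1)$ part equals $\Phi(x,y)\bigl(\tfrac{(x-y)\cdot n(y)}{|x-y|^2}-\tfrac{1}{2\,(x-y)\cdot n(y)}\bigr)$ --- is deferred to an unperformed ``bookkeeping'' step, and the computation you set up does not obviously land there. In your WKB framework the $O(1)$ term is $-\partial_n B_0(y)\,e^{ik|x-y|}$; decomposing $n$ into the reflected ray direction and a tangential direction, the ray-direction derivative of $B_0$ is $-\tfrac12(\Delta\psi)B_0$ by the first transport equation, and $\Delta\psi(y)$ is the sum of the principal curvatures of the \emph{reflected} wavefront, which by the classical mirror formula depends on the principal curvatures of $\Gamma_j$ at $y$ and on the incidence angle --- yet the claimed coefficient is curvature-free. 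A check in the one situation where the GO expansion is exact, a flat sound-soft mirror (the outgoing extension of $\Phi(x,\cdot)|_{\Gamma}$ is then the image source $\Phi(x^{*},\cdot)$), gives the $O(1)$ coefficient $-\tfrac{(x-y)\cdot n(y)}{|x-y|^{2}}$, which differs from the stated $\tfrac{(x-y)\cdot n(y)}{|x-y|^{2}}-\tfrac{1}{2\,(x-y)\cdot n(y)}$ except at special incidence angles. So the promised ``collapse'' cannot follow from the standard transport equations alone: either curvature terms must be tracked and shown to combine with additional contributions you have not accounted for, or one must engage with Alber's actual statement and the sense in which the paper extracts this formula ``by neglecting the higher-order terms.'' As written, the key coefficient is asserted, not derived, and your own framework appears to produce a different answer.

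A second, smaller but real, mismatch is in case (i): you replace the theorem's hypothesis --- no ray emanating from $x$ (including reflected rays) passes over $y$ --- by strict convexity and Melrose--Taylor-type shadow decay. The obstacles to which the paper applies the theorem are non-convex, and justifying superalgebraic smallness of $\partial_n u^{\rm tot}$ on the unilluminated portion for non-convex scatterers is precisely the content of \cite{Alber-1981}; treating it as a secondary technical point understates what remains to be proved relative to the statement as given.
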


The boundary of $\Omega_{1}^{-}$ is defined as a surface of revolution from the following parametric curve described in axisymmetric cylindrical $(r,z)$ coordinates,
\begin{eqnarray}
r(t) = \sin t, \qquad z(t) = 2 + \cos t + \frac{1}{2} \cos 2t, \qquad t \in [0,\pi], \label{Eqn.Obs}
\end{eqnarray}
and the boundary of $\Omega_{2}^{-}$ is a mirror imaging of  $\Omega_{1}^{-}$ about the $z=0$ plane. An illustration of these obstacles is displayed in Figure \ref{Fig:Obstacles}. Notice that they are not entirely convex. Yet the multiple-OSRC renders good results as displayed in Figure \ref{Fig:FFP-6pi} and Table \ref{Table:Error}.

\begin{figure}[ht]
\centering
\includegraphics[height=0.5\textwidth]{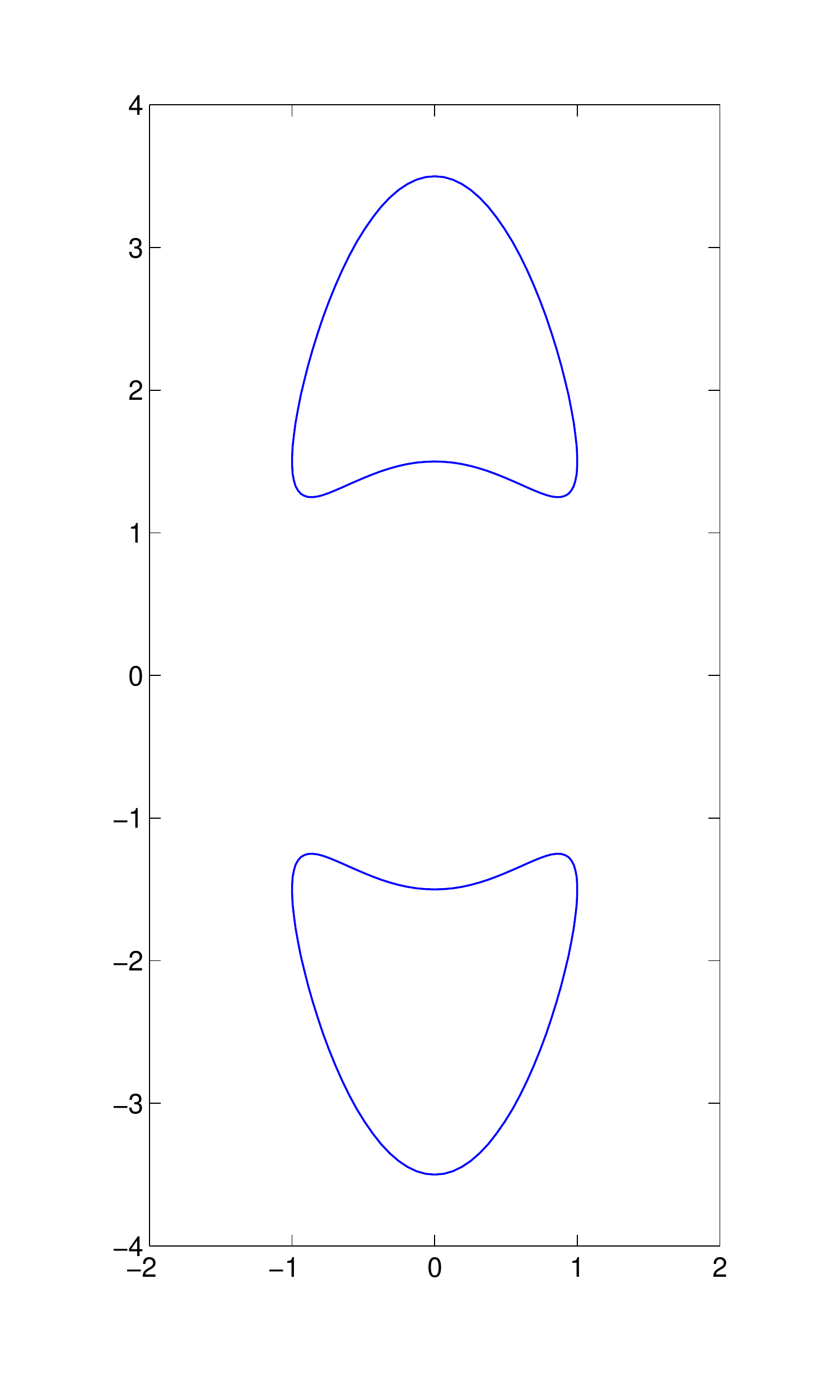} 
\caption{Illustration of the axisymmetric obstacles' cross section defined in (\ref{Eqn.Obs}).}
\label{Fig:Obstacles}
\end{figure}

For the incident field (\ref{Eqn.Uinc}), we computed the far-field pattern using the proposed multiple-OSRC. A comparison with the exact solution is displayed in Figure \ref{Fig:FFP-6pi} for wavenumber $k = 8 \pi$. Results for various values of $k$ are displayed in Table \ref{Table:Error} where the following relative error in the $L^2$-norm is reported,
\begin{eqnarray}
E(k) = \frac{\| \tilde{u}^{\infty} - u^{\infty} \|_{L^2}}{\| u^{\infty} \|_{L^2}}. \label{Eqn.RelError}
\end{eqnarray}
The meshes employed to discretize the axisymmetric surfaces were chosen to contain at least $8$ points per wave length in order to properly resolve the oscillatory behavior of the waves fields. We also note that Table \ref{Table:Error} reflects the fact that the error decreases as the frequency increases due to the asymptotic behavior ensured by Theorem \ref{Thm.Alber}. It is also worth mentioning here that the \textit{orders of scattering} iterative procedure described in Section \ref{Section.Iterative} was employed to obtain the numerical solutions for the scattering configurations of this section.

\begin{figure}[ht]
\centering
\includegraphics[height=0.5\textwidth]{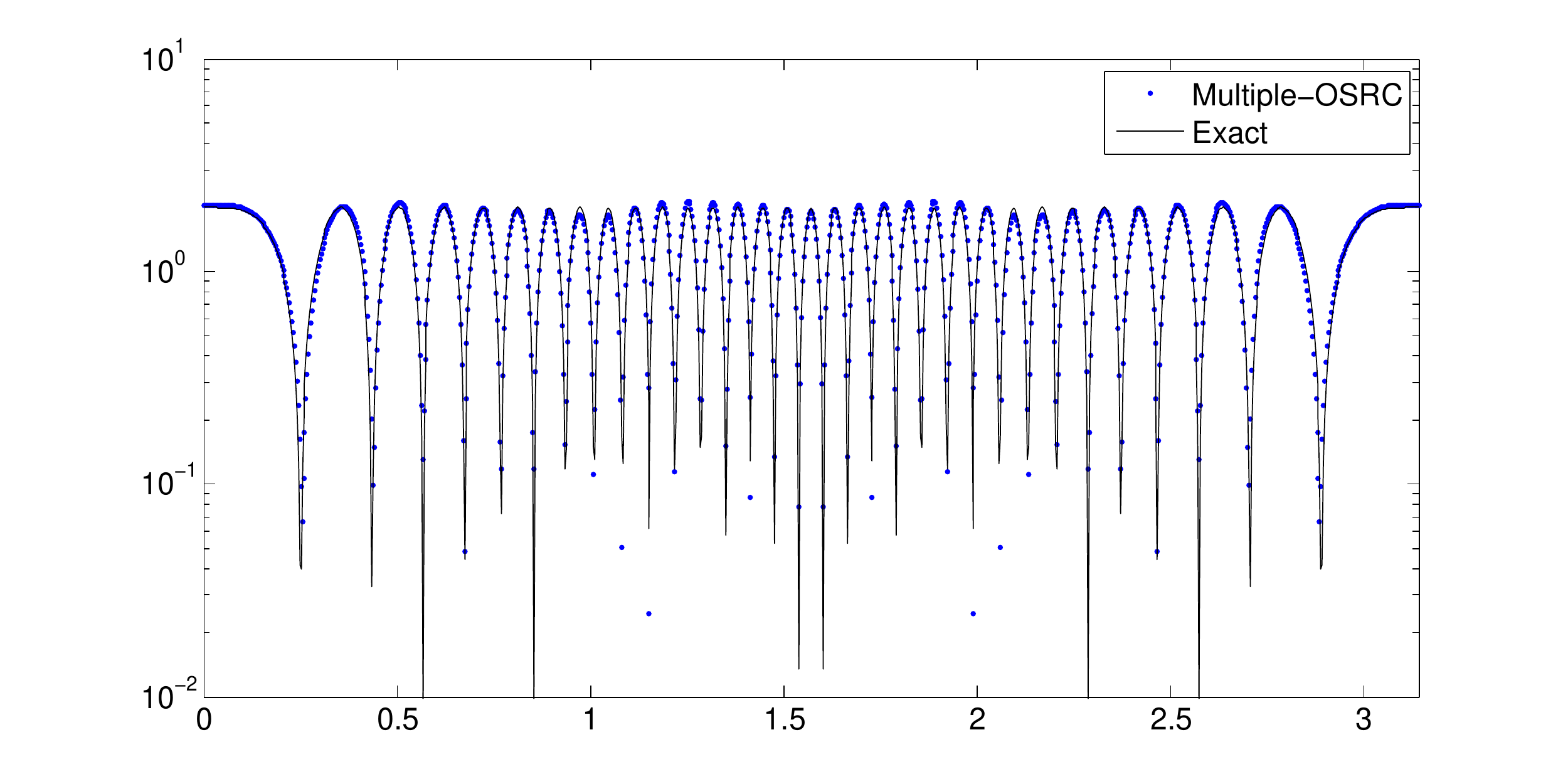} 
\caption{Comparison of absolute values of the far-field patterns for $k=8 \pi$. The horizontal axis represents the zenith angle measured from the $z$-axis for the axisymmetric problem presented in Subsection \ref{SubSection.Ex1}.}
\label{Fig:FFP-6pi}
\end{figure}

\begin{table}[ht]
\caption{Relative error (\ref{Eqn.RelError}) for various values of wavenumber $k$ for the problem presented in Subsection \ref{SubSection.Ex1}.}
\centering
\begin{tabular}{ c c c c c }
\hline \hline
$k=4\pi$ & $k=8\pi$ & $k=12\pi$ & $k=16\pi$ & $k=20\pi$ \\ 
\hline
1.74e-1 & 1.06e-1 & 7.45e-2 & 5.83e-2 & 4.99e-2  \\ 
\hline 
\end{tabular} 
\label{Table:Error}
\end{table}

\subsection{Example 2} \label{SubSection.Ex2}
Here we consider four obstacles embedded in two-dimensional space. They are illustrated in Figure \ref{Fig:Obstacles4}. As in the previous example, we define the incident field as to easily obtain the exact solution of this problem. The incident field is given by
\begin{eqnarray}
\uinc(x) = \Phi(x,c_{1}) + \Phi(x,c_{2}) + \Phi(x,c_{3}) + \Phi(x,c_{4}),  \label{Eqn.Uinc2}
\end{eqnarray}
where the two-dimensional fundamental solution is $\Phi(x,y) = -i/4 H^{(1)}_{0}(k|x-y|)$, and $H^{(1)}_{0}$ is the zeroth order Hankel function of the first kind. The incident field (\ref{Eqn.Uinc2}) is the superposition of four point sources with centers at $c_{1} = (2,2)$, $c_{2} = (2,-2)$, $c_{3} = (-2,-2)$ and $c_{4} = (-2,2)$, expressed in Cartesian coordinates. The exact far-field pattern is easily obtained to be $u^{\infty}(\hat{x}) = - \sum_{j=1}^{4} e^{- i k \hat{x} \cdot c_{j}}$. 

As before, we consider explicit expressions to approximate the single-DtN map $\Lambda_{j}$. For this example, we consider a couple of Bayliss-Turkel-like absorbing boundary conditions derived in \cite{Antoine99}. These are given by
\begin{eqnarray} 
& \text{1st order} \qquad & \tilde{\Lambda} u = ik u - \frac{\mathscr{C}}{2} u \label{Eqn.order1} \\
& \text{2nd order} \qquad & \tilde{\Lambda} u = ik u - \frac{\mathscr{C}}{2} u + \frac{i \mathscr{C}^2}{8 k (1 + i \mathscr{C} / k )} u + \frac{\partial_{s}^2 \mathscr{C}}{8 k^2} u - \partial_{s} \left( \frac{1}{2 i k (1 + i \mathscr{C} / k )} \partial_{s} \right) u \label{Eqn.order2}
\end{eqnarray}
where $\mathscr{C}$ is the curvature of the obstacle's boundary and $s$ represents its curvilinear abscissa.

For the incident field (\ref{Eqn.Uinc2}), we computed the approximate scattered field and the corresponding far-field pattern using the proposed multiple-OSRC. A comparison with the exact solution is displayed in Figure \ref{Fig:FFP-4pi-4Obs} for wavenumber $k = 4 \pi$. Results in terms of the relative error (\ref{Eqn.RelError}) for various values of $k$ are displayed in Table \ref{Table:Error2}. As expected, we note from this table that the second order approximation of the single-DtN map performs better than the first order. Similar to the example in the previous subsection, the relative error decreases as the frequency increases.

\begin{table}[ht]
\caption{Relative error (\ref{Eqn.RelError}) for various values of wavenumber $k$ and meshes with approx. $8$ points per wavelength.}
\centering
\begin{tabular}{ c c c c c c }
\hline \hline
Approx DtN map & $k=4\pi$ & $k=8\pi$ & $k=12\pi$ & $k=16\pi$ & $k=20\pi$ \\ 
\hline
\text{Eqn. (\ref{Eqn.order1}) } & 2.75e-2 & 1.38e-2 & 9.28e-3 & 7.05e-3 & 5.67e-3  \\ 
\text{Eqn. (\ref{Eqn.order2}) } & 2.70e-2 & 7.81e-3 & 3.70e-3 & 2.19e-3 & 1.47e-3  \\ 
\hline 
\end{tabular} 
\label{Table:Error2}
\end{table}

\begin{figure}[ht]
\centering
\includegraphics[width=0.5\textwidth]{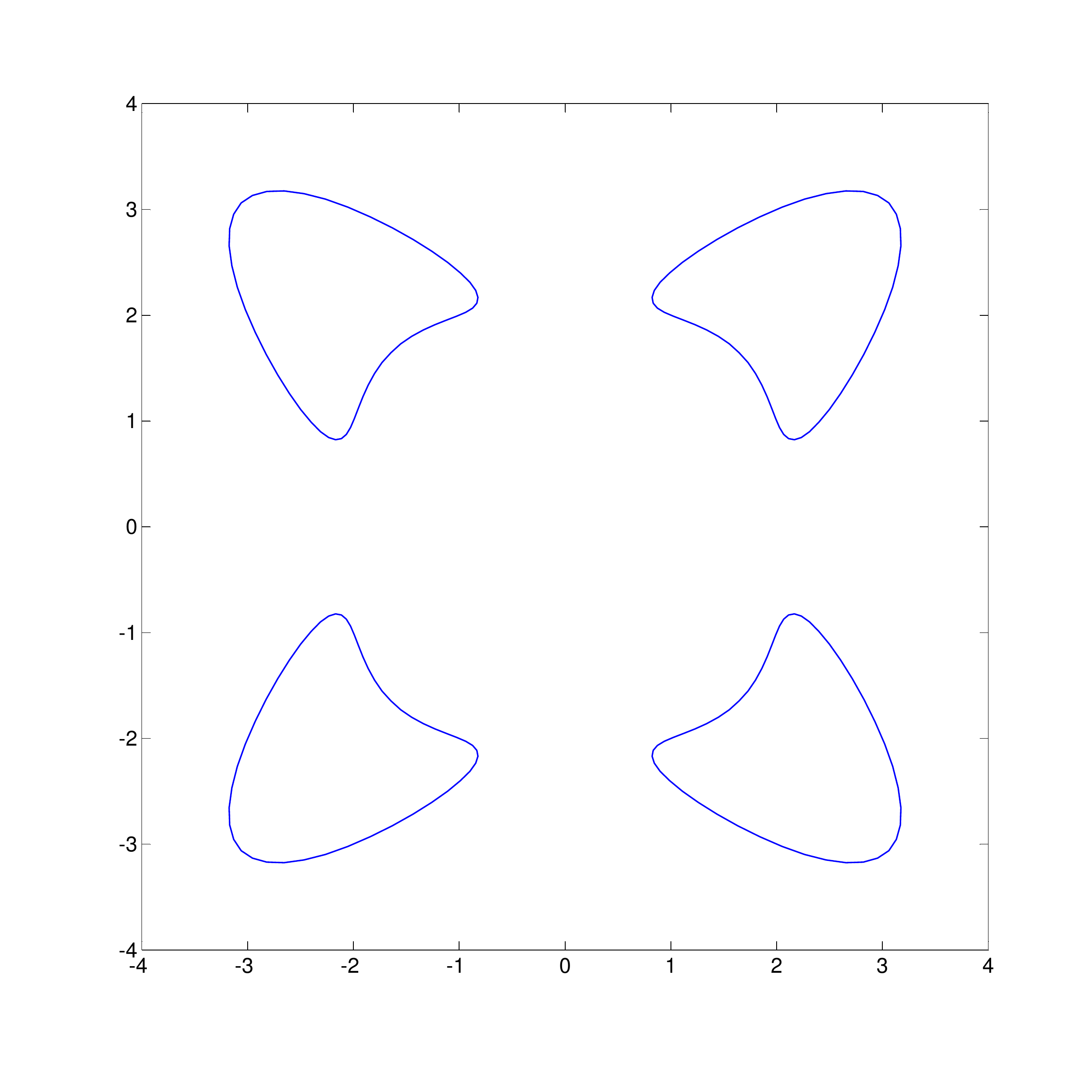} 
\caption{Illustration of the two-dimensional obstacles discussed in Subsection \ref{SubSection.Ex2}.}
\label{Fig:Obstacles4}
\end{figure}

\begin{figure}[ht]
\centering
\includegraphics[width=0.9\textwidth]{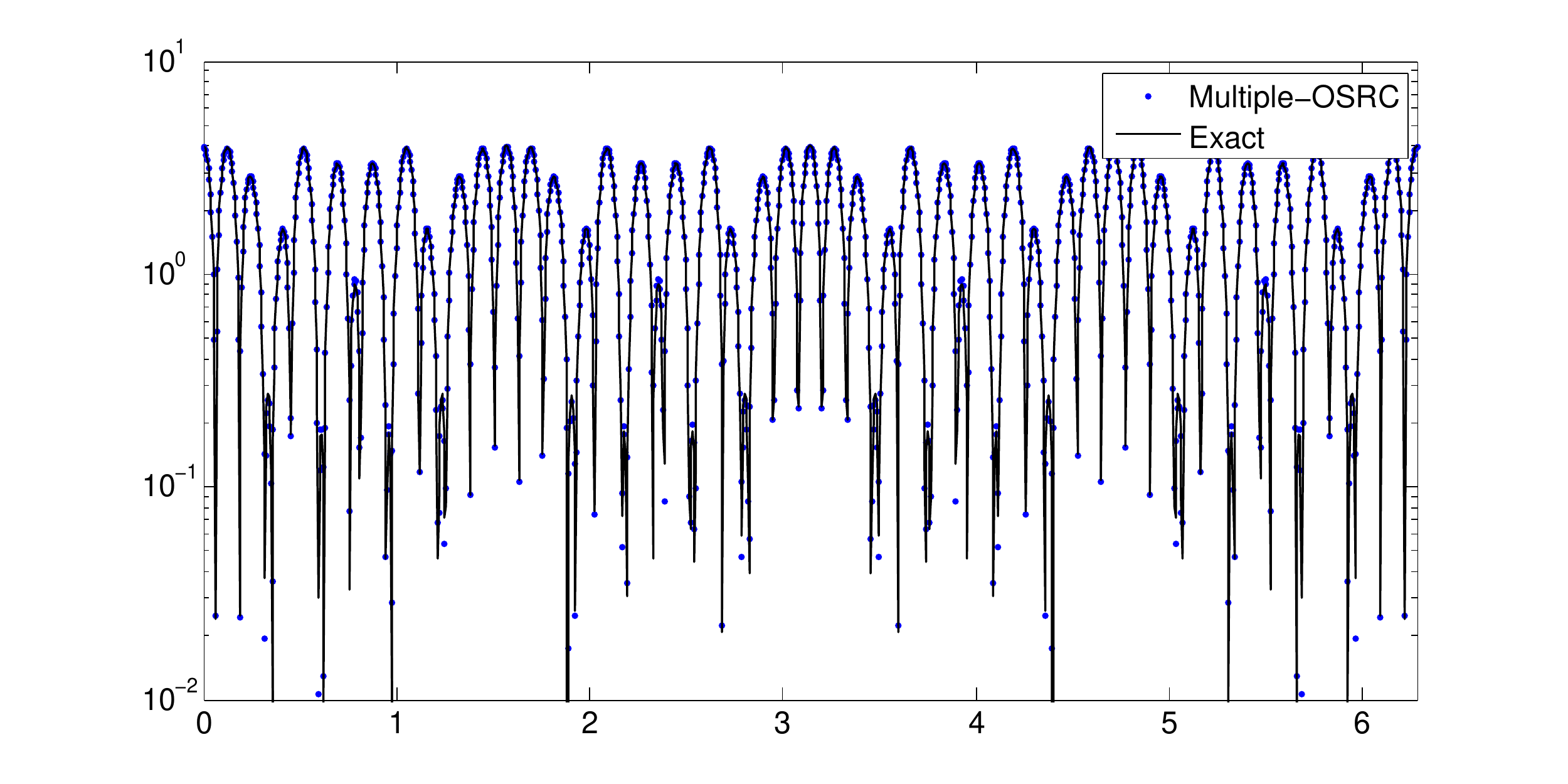} 
\caption{Comparison of absolute values of the far-field patterns for $k=4 \pi$. The horizontal axis represents the polar angle for the two-dimensional problem presented in Subsection \ref{SubSection.Ex2}.}
\label{Fig:FFP-4pi-4Obs}
\end{figure}

%%%%%%%%%%%%%%%%%%%%%%%%%%%%%%%%%%%%%%%%%%%%%%%%%%%%%%%%%%%%%%%%
%%%   S E C T I O N
%%%%%%%%%%%%%%%%%%%%%%%%%%%%%%%%%%%%%%%%%%%%%%%%%%%%%%%%%%%%%%%%
 
\section{Summary and future work}

A novel on-surface radiation condition for scattering from multiple obstacles has been proposed in this work. This condition allows the waves to interact between the several obstacles and propagate towards infinity in the correct physical manner. The multiple-OSRC renders a system of integral equations for the boundary values of the purely-outgoing waves which compose the sought scattered field.
The resulting system of boundary integral equations is Fredholm of the second kind with \textit{singularity-free kernels}. Under weak scattering conditions, the multiple-OSRC leads to a convergent method of successive approximations. At each iteration there is no need to solve a single-scattering problem for each obstacle because the single-OSRC renders an explicit approximate formula (up to numerical integration). 

Since the multiple-OSRC is only a low order approximate method, we highlight 
the potential of this proposed approach to yield extraordinarily good initial guesses and inexpensive pre-conditioners for Krylov iterative solutions of BIE. Similarly, the multiple-OSRC may serve as a fast approximate method to explore parameter-spaces employed in optimization algorithms and reduced-order models where most of the computational effort is spent in the construction of reduced bases \cite{Gan-Hes-Sta-2012}. 

The extension to other boundary conditions on the surface of the obstacles is relatively straightforward. Here again, explicit approximations of the Neumann-to-Dirichlet map are found, for instance in \cite{Engquist02}, to handle problems with a Neumann condition. One may similarly construct useful approximations for the Robin-to-Dirichlet and Robin-to-Neumann maps, and their inverses. Although the following are beyond the scope of this paper, we briefly delineate some possible extensions to enhance the impact of our work on engineering problems. 

\begin{enumerate}
\item{The extension to electrodynamics. The main ingredients in the formulation of the multiple-OSRC are the Decomposition Theorem \ref{TheoremDecomp} and physically meaningful approximations of the single-DtN maps. Fortunately, both of these ingredients are found in the literature for Maxwell's equations. See for instance \cite{Balabane04} and \cite{Engquist02,AntBar01,Amm-He-1998,Barucq-2002}, respectively.}
 
\item{The extension to the time domain. Here again, a decomposition theorem and applicable approximations of the hyperbolic-DtN maps are needed. Approximations for the hyperbolic-DtN map can be obtained by simply recalling the Fourier duality between $\partial_{t}$ and $-ik$. In the time domain, we would need to numerically integrate transient boundary layer potentials. See \cite{Dom-Sayas-2013} for a review and recent analysis of time domain boundary integral operators.}
\end{enumerate}

%%%%%%%%%%%%%%%%%%%%%%%%%%%%%%%%%%%%%%%%%%%%%%%%%%%%%%%%%%%%%%%%%%%%
%   A C K N O W L E D G M E N T S
%%%%%%%%%%%%%%%%%%%%%%%%%%%%%%%%%%%%%%%%%%%%%%%%%%%%%%%%%%%%%%%%%%%%

\section*{Acknowledgments}
The author would like to thank the anonymous referees for their
most constructive suggestions which certainly improved the quality
of the manuscript. The author also acknowledges the support provided by the graduate fellowship at Rice University.

%%%%%%%%%%%%%%%%%%%%%%%%%%%%%%%%%%%%%%%%%%%%%%%%%%%%%%%%%%%%%%%%%%%%
%   B I B L I O G R A P H Y
%%%%%%%%%%%%%%%%%%%%%%%%%%%%%%%%%%%%%%%%%%%%%%%%%%%%%%%%%%%%%%%%%%%%

\bibliographystyle{elsarticle-num}
\bibliography{AcoBib}

\begin{thebibliography}{10}
\expandafter\ifx\csname url\endcsname\relax
  \def\url#1{\texttt{#1}}\fi
\expandafter\ifx\csname urlprefix\endcsname\relax\def\urlprefix{URL }\fi
\expandafter\ifx\csname href\endcsname\relax
  \def\href#1#2{#2} \def\path#1{#1}\fi

\bibitem{MartinBook}
P.~Martin, Multiple Scattering, Cambridge Univ. Press, 2006.

\bibitem{Antoine2008}
X.~Antoine, C.~Chniti, K.~Ramdani, On the numerical approximation of
  high-frequency acoustic multiple scattering problems by circular cylinders,
  J. Comput. Phys. 227 (2008) 1754--1771.

\bibitem{Gau-Hua-Str-1995}
G.~C. Gaunaurd, H.~Huang, H.~C. Strifors, Acoustic scattering by a pair of
  spheres, The Journal of the Acoustical Society of America 98~(1) (1995)
  495--507.

\bibitem{Lee-2012}
W.-M. Lee, Acoustic scattering by multiple elliptical cylinders using
  collocation multipole method, Journal of Computational Physics 231~(14)
  (2012) 4597 -- 4612.

\bibitem{GivoliReview2}
D.~Givoli, High-order non-reflecting boundary conditions : A review, Wave
  Motion 39 (2004) 319--326.

\bibitem{Tsynkov1998}
S.~Tsynkov, Numerical solution of problems on unbounded domains, Appl. Numer.
  Math. 27 (1998) 465--532.

\bibitem{BHPR2010}
A.~Bermudez, L.~Hervella-Nieto, A.~Prieto, R.~Rodriguez, Perfectly matched
  layers for time-harmonic second order elliptic problems, Arch. Comput.
  Methods Engrg. 17 (2010) 77--107.

\bibitem{GivoliBook}
D.~Givoli, Numerical Methods for Problems in Infinite Domains, Vol.~33 of
  Studies in Applied Mechanics, Elsevier, 1992.

\bibitem{Ihlenburg}
F.~Ihlenburg, Finite Element Analysis of Acoustic Scattering, Springer, 1998.

\bibitem{Kriegsmann87}
G.~Kriegsmann, A.~Taflove, K.~Umashankar, A new formulation of electromagnetic
  scattering using an on surface radiation condition approach, IEEE Trans. Ant.
  Prop. AP35 (1987) 153--161.

\bibitem{Givoli2001}
D.~Givoli, High-order nonreflecting boundary conditions without high-order
  derivatives, J. Comput. Phys. 170 (2001) 849--870.

\bibitem{Rab-Giv-Bec-2010}
D.~Rabinovich, D.~Givoli, E.~B\'{e}cache, Comparison of high-order absorbing
  boundary conditions and perfectly matched layers in the frequency domain,
  Int. J. Numerical Methods in Biomedical Engineering 26~(10) (2010)
  1351--1369.

\bibitem{Grote01}
M.~Grote, C.~Kirsch, \textsc{D}irichlet-to-\textsc{N}eumann boundary conditions
  for multiple scattering problems, J. Comput. Phys. 201 (2004) 630--650.

\bibitem{Grote02}
M.~Grote, C.~Kirsch, Nonreflecting boundary condition for time-dependent
  multiple scattering, J. Comput. Phys. 221 (2007) 41--62.

\bibitem{Grote03}
M.~Grote, I.~Sim, Local nonreflecting boundary condition for time-dependent
  multiple scattering, J. Comput. Phys. 230 (2011) 3135--3154.

\bibitem{JiangZheng}
X.~Jiang, W.~Zheng, Adaptive perfectly matched layer method for multiple
  scattering problems, Comput. Methods Appl. Mech. Engrg. 201-204 (2012)
  42--52.

\bibitem{Mart-Rok-2007}
P.~Martinsson, V.~Rokhlin, A fast direct solver for scattering problems
  involving elongated structures, Journal of Computational Physics 221~(1)
  (2007) 288--302.

\bibitem{Ant-Dar-2005}
X.~Antoine, M.~Darbas, Alternative integral equations for the iterative
  solution of acoustic scattering problems, The Quarterly Journal of Mechanics
  and Applied Mathematics 58~(1) (2005) 107--128.

\bibitem{Ant-Dar-2007}
X.~Antoine, M.~Darbas, Generalized combined field integral equations for the
  iterative solution of the three-dimensional \textsc{H}elmholtz equation,
  Mathematical Modelling and Numerical Analysis 41 (2007) 147--167.

\bibitem{Gan-Hes-Sta-2012}
M.~Ganesh, J.~Hesthaven, B.~Stamm, A reduced basis method for electromagnetic
  scattering by multiple particles in three dimensions, Journal of
  Computational Physics 231~(23) (2012) 7756 -- 7779.

\bibitem{Ant-InBook2008}
X.~Antoine, Advances in the on-surface radiation condition method: Theory,
  numerics and applications, in: F.~Magoul\`{e}s (Ed.), Comput. Meth. for
  Acoustics Problems, Saxe-Coburg Publ. Stirlingshire, UK, 2008, pp. 207--232.

\bibitem{Antoine01}
X.~Antoine, Fast approximate computation of a time-harmonic scattered field
  using the on-surface radiation condition method, IMA J. Appl. Math. 66 (2001)
  83--110.

\bibitem{Antoine06}
X.~Antoine, M.~Darbas, Y.~Lu, An improved surface radiation condition for
  high-frequency acoustic scattering problems, Comput. Methods Appl. Mech.
  Engrg. 195 (2006) 4060--4074.

\bibitem{Engquist02}
A.~Atle, B.~Engquist, On surface radiation conditions for high-frequency wave
  scattering, J. Comp. Appl. Math. 204 (2007) 306--316.

\bibitem{Antoine99}
X.~Antoine, H.~Barucq, A.~Bendali, \textsc{B}ayliss-\textsc{T}urkel like
  radiation conditions on surfaces of arbitrary shape, J. Math. Anal. Appl. 229
  (1999) 184--211.

\bibitem{AntBar01}
X.~Antoine, H.~Barucq, Microlocal diagonalization of strictly hyperbolic
  pseudodifferential systems and application to the design of radiation
  conditions in electromagnetism, SIAM J. Appl. Math. 61 (2001) 1877--1905.

\bibitem{BarDiazDup12}
H.~Barucq, J.~Diaz, V.~Duprat, Micro-differential boundary conditions modelling
  the absorption of acoustic waves by 2\textsc{D} arbitrarily-shaped convex
  surfaces, Commun. Comput. Phys. 11 (2012) 674--690.

\bibitem{Alber-1981}
H.-D. Alber, Justification of geometrical optics for non-convex obstacles, J.
  Math. Anal. Appl. 80 (1981) 372--386.

\bibitem{ColtonKress02}
D.~Colton, R.~Kress, Inverse Acoustic and Electromagnetic Scattering Theory,
  2nd Edition, Springer, 1998.

\bibitem{Nedelec01}
J.~Nedelec, Acoustic and Electromagnetic Equations : Integral Representations
  for Harmonic Problems, Springer, 2001.

\bibitem{McLean2000}
W.~McLean, Strongly Elliptic Systems and Boundary Integral Equations, Cambridge
  Univ. Press, 2000.

\bibitem{Engquist01}
B.~Engquist, A.~Majda, Absorbing boundary conditions for the numerical
  simulation of waves, Math. Comput. 31 (1977) 629--651.

\bibitem{Bayliss01}
A.~Bayliss, M.~Gunzburger, E.~Turkel, Boundary conditions for the numerical
  solution of elliptic equations in exterior regions, SIAM J. Appl. Math. 42
  (1982) 430--451.

\bibitem{Higdon01}
R.~Higdon, Absorbing boundary conditions for difference approximations to the
  multi-dimensional wave equation, Math. Comput. 47 (1986) 437--459.

\bibitem{Balabane04}
M.~Balabane, Boundary decomposition for the \textsc{H}elmholtz and
  \textsc{M}axwell equations 1: disjoint sub-scatterers, Asymptotic Analysis 38
  (2004) 1--10.

\bibitem{Wang-Liu-2013}
H.~Wang, J.~Liu, On decomposition method for acoustic wave scattering by
  multiple obstacles, Acta Mathematica Scientia 33~(1) (2013) 1 -- 22.

\bibitem{JCP2010}
S.~Acosta, V.~Villamizar, Coupling of \textsc{D}irichlet-to-\textsc{N}eumann
  boundary condition and finite difference methods in curvilinear coordinates
  for multiple scattering, J. Comput. Phys. 229 (2010) 5498--5517.

\bibitem{CoaJoly2012}
J.~Coatl\'{e}ven, P.~Joly, Operator factorization for multiple-scattering
  problems and an application to periodic media, Commun. Comput. Phys. 11
  (2012) 303--318.

\bibitem{Aco-Vill-Mal}
S.~Acosta, V.~Villamizar, B.~Malone, The \textsc{D}t\textsc{N} nonreflecting
  boundary condition for multiple scattering problems in the half-plane,
  Comput. Methods Appl. Mech. Engrg. 217-220 (2012) 1--11.

\bibitem{Kress01}
R.~Kress, Linear Integral Equations, 2nd Edition, Springer, 1999.

\bibitem{Amm-He-1998}
H.~Ammari, S.~He, An on-surface radiation condition for \textsc{M}axwell's
  equations in three dimensions, Microwave and Optical Technology Letters
  19~(1) (1998) 59--63.

\bibitem{Barucq-2002}
H.~Barucq, A new family of first-order boundary conditions for the
  \textsc{M}axwell system: derivation, well-posedness and long-time behavior,
  Journal de Math\'{e}matiques Pures et Appliqu\'{e}es 82~(1) (2002) 67--88.

\bibitem{Dom-Sayas-2013}
V.~Dominguez, F.-J. Sayas, Some properties of layer potentials and boundary
  integral operators for the wave equation, J. Integral Equations Appl. 25~(2)
  (2013) 253--294.

\end{thebibliography}

\end{document}